\documentclass[12pt]{article}
\pdfoutput=1
\usepackage[margin=1in]{geometry}
\usepackage[onehalfspacing]{setspace}
\usepackage{xcolor}
\usepackage{amsmath}
\usepackage{natbib}
\usepackage{hyperref}
\usepackage[capitalize]{cleveref}
\usepackage{graphicx,amssymb}
\usepackage{amsthm}
\usepackage{float}
\usepackage{xfrac}
\usepackage{xspace}
\usepackage{enumerate,multicol,multirow}
\usepackage{cases}
\usepackage{caption,subcaption}

\usepackage{todonotes}

\DeclareMathOperator*{\argmax}{argmax}

\newtheorem{corollary}{Corollary}
\newtheorem{theorem}{Theorem}
\newtheorem{lemma}{Lemma}

\newtheorem{proposition}{Proposition}

\newtheorem{definition}{Definition}

\newtheorem{example}{Example}

\newcommand{\notshow}[1]{{}}
\newcommand{\AutoAdjust}[3]{{ \mathchoice{ \left #1 #2  \right #3}{#1 #2 #3}{#1 #2 #3}{#1 #2 #3} }}
\newcommand{\Xcomment}[1]{{}}

\newcommand{\InParentheses}[1]{\AutoAdjust{(}{#1}{)}}

\newcommand{\supp}{\mathbf{Supp}}

\newcommand{\given}{\,\vert\,}

\newcommand{\prob}[2][]{\text{\bf Pr}\ifthenelse{\not\equal{}{#1}}{_{#1}}{}\!\left[{\def\givenn{\middle|}#2}\right]}
\newcommand{\expect}[2][]{\text{\bf E}\ifthenelse{\not\equal{}{#1}}{_{#1}}{}\!\left[{#2}\right]}

\newcommand{\tparen}{\big}
\newcommand{\tprob}[2][]{\text{\bf Pr}\ifthenelse{\not\equal{}{#1}}{_{#1}}{}\tparen[{\def\given{\tparen|}#2}\tparen]}
\newcommand{\texpect}[2][]{\text{\bf E}\ifthenelse{\not\equal{}{#1}}{_{#1}}{}\tparen[{\def\given{\tparen|}#2}\tparen]}

\newcommand{\sprob}[2][]{\text{\bf Pr}\ifthenelse{\not\equal{}{#1}}{_{#1}}{}[#2]}
\newcommand{\sexpect}[2][]{\text{\bf E}\ifthenelse{\not\equal{}{#1}}{_{#1}}{}[#2]}

\newcommand{\lbr}[1]{\left\{#1\right\}}
\newcommand{\rbr}[1]{\left(#1\right)}

\newcommand{\indicate}[1]{\mathbf{1}\left[#1\right]}
\newcommand{\reals}{\mathbb{R}}

\newcommand{\val}{v}
\newcommand{\belief}{\nu}

\newcommand{\sigprob}{ r\InParentheses{s} }
\newcommand{\sigprobsingle}{ r }
\newcommand{\vals}{V}

\newcommand{\dist}{F}
\newcommand{\density}{f}
\newcommand{\util}{u}
\newcommand{\alloc}{x}
\newcommand{\pay}{p}
\newcommand{\experiment}{\sigma}
\newcommand{\signal}{s}
\newcommand{\signals}{S}

\newcommand{\mech}{M}

\newcommand{\rev}{{\rm Rev}}

\newcommand{\optrev}{{\rm OPT\text{-}Rev}}
\newcommand{\optwel}{{\rm OPT\text{-}Wel}}
\newcommand{\wel}{{\rm Wel}}

\setcitestyle{authoryear}

\providecommand{\keywords}[1]
{
  \small	
  \textbf{\textit{Keywords---}} #1
}

\providecommand{\JEL}[1]
{
  \small	
  \textbf{\textit{JEL---}} #1
}

\begin{document}

\begin{titlepage}
\title{Multi-Dimensional Screening with Endogenous Information Disclosure\thanks{The authors thanks S.\ Nageeb Ali, Dirk Bergemann, Yi-Chun Chen, Yannai Gonczarowski, Marina Halac, Eran Shmaya, Philipp Strack, Kai Hao Yang, Kun Zhang, and audiences at Harvard EconCS seminar, GAIMSS'25 Workshop and SAET Conference for helpful comments and suggestions. Yang Cai was supported by the NSF Awards CCF-1942583 (CAREER) and CCF-2342642. Yingkai Li thanks NUS Start-up Grant for financial support. Part of the work was done while Yingkai Li was a postdoc at Yale under the support of a Sloan Foundation Research Fellowship. Jinzhao Wu was supported by the NSF Awards CCF-1942583 (CAREER), CCF-2342642, and a CADMY Research Fellowship. }}

\author{Yang Cai\thanks{Department of Computer Science, Yale University. Email: \url{yang.cai@yale.edu}} \\ 
\and Yingkai Li\thanks{Department of Economics, National University of Singapore. Email: \url{yk.li@nus.edu.sg}}
\and Jinzhao Wu\thanks{Department of Computer Science, Yale University. Email: \url{jinzhao.wu@yale.edu}} }

\date{}

\maketitle

\begin{abstract}
We study multi-product monopoly pricing where the seller jointly designs the selling mechanism and the information structure for the buyer to learn his values. Unlike the case with exogenous information, we show that when the seller controls information, even uniform pricing guarantees at least half of the optimal revenue. Moreover, for negatively affiliated or exchangeable value distributions, deterministic pricing is revenue-optimal. Our results highlight the power of information design in making pricing mechanisms approximately optimal in multi-dimensional settings.
\end{abstract}

\keywords{pricing, correlation, multi-dimensional screening, information design, approximation}

\JEL{D44,D82,D83}

\end{titlepage}

\section{Introduction}
\label{sec:intro}

When a single seller offers a portfolio of differentiated products, designing relative prices becomes the central lever for steering which alternatives attract demand and how revenue is earned across the firm's products. Relative to the single-product benchmark, pricing one product inevitably reallocates demand across the firm's own products, so internal substitution and cannibalization shape every pricing move. Classic theory developed the tools for pricing multiple products and structured menus \citep{armstrong_multiproduct_1996,rochet_ironing_1998,hart2019selling}. Related work also documents how relative prices and prominence govern take-up across available options \citep{edelman2007internet,athey2011position}. This paper starts with this canonical problem and studies how a platform's design of prices and information can be deployed jointly to manage substitution across its own products.

A platform typically does more than post prices: it controls what buyers learn and which options they consider through ranking and recommendations, the salience of attributes, and the coarseness of disclosure. These instruments are pervasive. Online travel intermediaries, such as Priceline and Hotwire, deploy opaque offers that disclose ratings, neighborhoods, or sample reviews while withholding identities until purchase; auto and housing markets rely on pre-purchase inspections or certifications that selectively reveal quality; app stores and streaming services shape attention through curation and previews well before any transaction. Such practices interact with pricing by influencing which products receive consideration and how demand is distributed across the firm's products. 

Motivated by this, recent work formulates a joint-design problem in which the seller commits ex ante to (i) a public menu and (ii) an information policy that governs what the buyer can learn about the buyer's valuations \citep[e.g.,][]{bergemann2022screening}. In this formulation, the buyer privately observes the realized signal, and prices and allocations do not directly condition on it (i.e., the menu is fixed ex ante; the buyer self-selects from it based on the signal).
Because information is chosen alongside the menu, the optimal mechanism is a finite menu aligned with the disclosure policy, with pooling used to manage rents even under rich heterogeneity.

We adopt the joint design perspective in a multi-product setting. The platform commits to a lottery-menu mechanism that offers randomized allocation across products and associated transfers, in parallel with an information policy that determines what each buyer privately learns about their own valuations before purchasing. We focus on environments where values for multiple products are realized jointly, allowing arbitrary correlation, and the buyer demands at most one product \citep[e.g.,][]{hart2019selling}.\footnote{We provide a brief discussion of additive valuations in \cref{subsec:additive}.} This contrasts with \citet{bergemann2022screening}, who studies a single-product environment with vertical quality differentiation \citep[c.f.,][]{mussa1978monopoly}. While we retain their institutional constraint of posted menus without directly conditioning on realized private signals, the multi-product, unit-demand setting introduces an additional economic force: horizontal information, alongside vertical information about overall appeal. Greater horizontal precision improves matching efficiency across substitutes, whereas coarser disclosure compresses information and limits rents; thus, the optimal policy balances these two considerations.

The optimal joint design problem in our paper is inherently multi-dimensional. Even when one instrument is held fixed, characterizing the other remains challenging \citep[e.g.,][]{kleiner2024extreme,daskalakis2017strong}. For instance, the optimal mechanism with an exogenous information structure is known only in special environments \citep[e.g.,][]{manelli2006bundling,thirumulanathan2019optimal,bikhchandani2022selling,haghpanah2021pure} or when robustness is an explicit concern for the monopolist \citep[e.g.,][]{carroll2017robustness}. In the classic Bayesian framework with unrestricted correlation across values, \citet{hart2019selling} show that even approximate revenue maximization can require sophisticated randomization and an unbounded menu size, and that simple mechanisms such as deterministic pricing have an unbounded revenue gap relative to the optimal mechanism. 
In sharp contrast, the main contribution of our paper is to show that, in the joint design problem, deterministic pricing mechanisms, when paired with an appropriately chosen information-disclosure policy, are competitive,\footnote{We say that a mechanism is competitive if it extracts at least an absolute constant fraction of the optimal revenue, independent of the number of products for sale or the underlying valuation distribution.} extracting more than one-half of the optimal revenue. 

In particular, even uniform-pricing mechanisms, where all products are priced identically, extract at least half of the optimal revenue (\cref{prop:uniform_pricing}). This bound is tight because uniform pricing lacks the flexibility to tailor prices to heterogeneous priors across products. The key intuition is that uniform pricing helps counteract cannibalization stemming from unit-demand preferences, while a carefully chosen information structure allows the seller to capture most of the buyer's surplus with limited efficiency loss from not fully revealing product values. Our results indicate that, when the seller controls information structures, guaranteeing a large fraction of optimal revenue does not require complex lotteries or price discrimination across products.

We complement our main result by showing that pricing mechanisms, which allow price differentiation across products, offer strictly better worst-case guarantees than uniform pricing (\cref{thm:approx_opt}). This improvement arises from setting higher prices on products for which the buyer's values are significantly larger than those for other products, thereby further reducing the buyer's information rent. Moreover, we show that this improved revenue guarantee can be achieved using only two distinct prices.

Finally, the improved approximation guarantee for pricing mechanisms is based on worst-case analysis and is not tight in general. In many canonical environments, the performance of pricing mechanisms is stronger than our proven worst-case bounds. For example, with two products, the pricing mechanism guarantees at least $\frac{2}{3}$ of the optimal revenue (\cref{prop:two_item_approx}), and for an arbitrary number of products, if the value distribution is negatively affiliated or satisfies the exchangeable condition, the pricing mechanism is revenue-optimal and extracts the full surplus (\cref{thm:full_surplus_multi,thm:full_surplus_exchange}). These additional observations further illustrate the competitiveness of pricing mechanisms when the seller can design information structures.

\subsection{Related Work}
\label{sub:literature}
Our paper relates to the literature on information design in auctions \citep[e.g.,][]{bergemann2007information,esHo2007optimal,li2017discriminatory,Bergemann2022Optimal,deb2021multi,schottmuller2023optimal,chen2023information,ravid2024monopoly,cai2024algorithmic,wei2024reverse,bergemann2022screening}. The most closely related papers are \citet{shi2024multi,bergemann2022screening,bergemann2025alignment}. \citet{bergemann2022screening} study a single-product monopoly environment with vertical quality differentiation, \`a la \citet{mussa1978monopoly}, and formulate a joint-design problem in which the seller chooses both a menu and an information policy. In contrast, we analyze multi-product monopoly pricing in the spirit of \citet{hart2019selling}, where horizontal differentiation and substitution across the seller's products are central to the joint design. 
\citet{shi2024multi} consider a related multi-product setting under independence across products; in our framework, independence is a special case of negative affiliation, for which the optimal design is tractable (see \cref{subsec:fullsurplus}). Beyond that special case, correlation is first-order in applications. For instance, in hotel pricing on online platforms, product values are naturally correlated through location, brand, and policies (e.g., smoking, cancellation). Finally, \citet{bergemann2025alignment} examine an oligopoly environment in which consumer surplus is aligned with total surplus, whereas our focus is on a monopoly platform and revenue.

The methodology we adopt in our paper is related to an extensive body of work in computer science that provides approximation guarantees for simple mechanisms in various settings \citep[e.g.,][]{hartline2009simple,chawla2010multi,cai2016duality,cai_simple_2017,babaioff2017menu,jin2019tight,babaioff2020simple,cai_simultaneous_2023,daskalakis_multi-item_2022,feng2023simple}.
This study of approximation guarantees for simple mechanisms has also recently gained traction in the economics literature \citep[e.g.,][]{bulow1989simple,hartline2015non,hart2017approximate,hart2019selling,mirrokni2020non,bergemann2022third,akbarpour2023algorithmic}. 
The main purpose of such approximation exercises is not so much to pinpoint the exact approximation factor, but rather to facilitate relative comparisons based on worst-case performance, thereby distilling the salient features of approximately optimal mechanisms across different settings.
For a more detailed discussion on interpreting approximation results in economic environments, see the surveys by \citet{hartline2012approximation} and \citet{roughgarden2019approximately}.

The most relevant approximation result in the literature is due to \citet{hart2019selling}, who show that in the absence of information design, even selling just two correlated products to a buyer can lead to an unbounded revenue gap between the optimal mechanism and deterministic pricing mechanisms. In contrast, we provide the first constant-factor approximation guarantee for deterministic pricing mechanisms in general correlated value environments by considering settings in which the seller can design the information structures.

The conceptual idea that endogenous information can make simple mechanisms competitive has also been observed in other auction environments, where endogeneity arises because buyers acquire information optimally. For example, with optimal buyer learning, \citet{deb2021multi} show that bundling is optimal for selling multiple products with additive valuations under the exchangeable prior assumption, while \citet{li2022selling} shows that selling full information using posted pricing yields a 2-approximation to the optimum when the buyer can endogenously acquire additional information from other sources. In contrast, our model is substantially different: the seller controls the information structure, and the simple mechanisms we consider are pricing mechanisms.

\section{Model}
\label{sec:model}

We consider the problem of selling $m$ heterogeneous products to a buyer. 
The buyer has a value $\val = (\val_1,\dots,\val_m)$ drawn from a potentially correlated distribution $\dist$, 
where $\val_i\in \reals_{+}$ is the buyer's value for product $i$ for any $i\in[m]$.
We also refer to $\val$ as the buyer's type.
Let $\vals\subseteq \reals^m_{+}$ be the support of distribution~$\dist$. 
To simplify the exposition, we assume that $\dist$ is a discrete distribution with finite support.\footnote{All results in our paper extend to continuous distributions as well. }
Let $\density$ be the probability mass function. 
For any product $i$, let $\dist_i$ be the marginal distribution over values for product $i$ and let $\vals_i$ be the support of distribution~$\dist_i$. 
We assume that $\dist$ is common knowledge.

\paragraph{Information Structures}
In contrast to classic monopoly pricing settings where the buyer is privately informed about his values for the products, 
the buyer in our model is initially uninformed and relies on the information structure designed by the seller to learn his values. 
Specifically, the seller can commit to an information structure $(\signals,\experiment)$ 
where~$\signals$ is a measurable signal space
and $\experiment: \vals \to \Delta(\signals)$ is a mapping from values to signals. 
The buyer privately observes the realized signal $\signal\in\signals$
and updates his belief according to Bayes' rule. 

For any signal $\signal\in\signals$, let \( \sigprob \) denote the ex-ante probability of receiving a specific signal~$\signal$, and \( \belief(\signal) \) represent the buyer's posterior mean values upon receiving signal~$\signal$. 
Furthermore, \( \belief_i(\signal) \) specifies the \(i\)-th coordinate in \( \belief(\signal) \).
That is, 
\begin{align*}
   \sigprob = \expect[\val \sim \dist, \signal' \sim \experiment\InParentheses{\val}]{\indicate{\signal' = \signal}} \text{ and }  
   \belief_i\InParentheses{\signal} = \expect[\val \sim \dist, \signal' \sim \experiment\InParentheses{\val}]{\val_i \given \signal' = \signal} \text{ for all } i\in [m]. 
\end{align*}

\paragraph{Buyer's Utilities}
In this paper, we assume that the buyer has unit demand for the products. 
Moreover, after receiving the signal from the mechanism, the buyer must make a consumption choice before the realization of the values to enjoy the utility.\footnote{For instance, when the buyer books hotel rooms for his travel destination, the values are realized only after staying in a single hotel room for the night. This way of modeling unit-demand utilities is consistent with \citet{hart2019selling} if we interpret the private types in their model as the posterior mean values of the buyer. We also provide additional discussions for an alternative unit-demand utility model in \cref{subapx:unit_demand}.} 
That is, for any realized allocation $\alloc\in\{0,1\}^m$, 
the utility of the buyer with a posterior mean value $\belief$ for receiving allocation $\alloc$ while paying a price $\pay$ is 
\begin{align*}
\util(\alloc,\pay;\belief) = \max_{i\in[m]} \belief_i\alloc_i - \pay.
\end{align*}
Given such unit-demand utility of the buyer, it is without loss of generality to focus on mechanisms with (random) allocation $x\in [0,1]^m$ such that $\sum_i x_i\leq 1$.
Moreover, for any such (random) allocation $x\in [0,1]^m$, the expected utility of the buyer is
\begin{align*}
\util(\alloc,\pay;\belief) = \sum_{i\in[m]} \belief_i\alloc_i - \pay
= \expect[v\sim\belief]{\sum_{i\in[m]} v_i\alloc_i} - \pay.
\end{align*}
Finally, we omit $\belief$ from the notation when the buyer's value is clear from the context.

\paragraph{Mechanisms}
A mechanism $\mech=((\signals,\experiment), \alloc,\pay)$ is a tuple containing 
an information structure $(\signals,\experiment)$, 
an allocation rule $\alloc : \signals \to [0,1]^m$,
and a payment rule $\pay : \signals \to \reals$. 
Mechanism~$\mech$ is 
\emph{incentive compatible} (IC) if 
\[\expect[v\sim F,s\sim \sigma(v)]{\util(\alloc(\signal), \pay(\signal); \belief(\signal)) \given \signal}\ge \expect[v\sim F,s\sim \sigma(v)]{\util(\alloc(\signal'), \pay(\signal'); \belief(\signal)) \given \signal}\]
for all $\signal,\signal'\in\signals$.
Moreover, this mechanism is \emph{individually rational} (IR) if
\begin{align*}
\expect[v\sim F,s\sim \sigma(v)]{\util(\alloc(\signal), \pay(\signal); \belief(\signal)) \given \signal} \geq 0
\end{align*}
for all $\signal\in\signals$. 
To simplify the notation, we use $\expect[\dist,\experiment]{\cdot}$ to denote $\expect[v\sim F,s\sim \sigma(v)]{\cdot}$, and we also omit $\dist$ in $\expect[\dist,\experiment]{\cdot}$
when it is clear from the context.

For any IC-IR mechanism $\mech$, we denote its expected revenue as 
\begin{align*}
\rev(\mech) 
= \expect[\dist,\experiment]{\pay(\signal)}
\end{align*}
and its expected welfare as 
\begin{align*}
\wel(\mech) = \expect[\dist,\experiment]{\sum_{i\in[m]}\val_i\cdot\alloc_i(\signal)}. 
\end{align*}
The optimal revenue and optimal welfare are denoted as 
\begin{align*}
\optrev = \max_{\mech \text{ is IC-IR}} \rev(\mech) 
\quad\text{and}\quad
\optwel = \max_{\mech \text{ is IC-IR}} \wel(\mech).
\end{align*}

By the revelation principle, it is without loss of generality to focus on IC-IR mechanisms. 
The objective of the seller is to maximize the expected revenue.

\paragraph{Deterministic Pricing}
Among all possible incentive-compatible and individually rational mechanisms, a particularly interesting and practically implementable class is pricing mechanisms. Intuitively, after committing to an information structure, pricing mechanisms assign deterministic prices to each product and allow the buyer to select his most preferred product based on the private signal received. Furthermore, a pricing mechanism is referred to as a uniform pricing mechanism if the prices for all products are identical. Formally, for any $i \in [m]$, let $\delta_i$ denote the point mass distribution corresponding to the sale of product $i$.
\begin{definition}[Deterministic Pricing]
A mechanism $\mech=((\signals,\experiment), \alloc,\pay)$ is implemented by a \emph{pricing mechanism} if there exists a vector of deterministic prices $(\hat{p}_1,\dots,\hat{p}_m)\in\reals^m_+$ such that for any $\signal \in \signals$, 
$\alloc(\signal) \in \argmax_{x\in\{0,1\}^m} \util(\alloc,\hat{p}\cdot\alloc;\,\belief(\signal))$
and $\pay(\signal) = \hat{p}\cdot\alloc(\signal)$.
Moreover, mechanism $\mech$ is a \emph{uniform pricing} mechanism if, in addition, there exists $\pay^*$ such that $\hat{p}_j=\pay^*$ for all $j\in[m]$.
\end{definition}
\noindent To avoid ambiguity, we refer to the general randomized mechanism as the \emph{lottery mechanism}. 

\paragraph{Timeline}
We outline the timeline of our model:

\begin{enumerate}
    \item The seller selects an information structure \((\signals, \experiment)\) based on distribution $\dist$, where \(\experiment:\vals \rightarrow \Delta(\signals)\) maps values to signals. The seller then designs a mechanism \(\mech = (\alloc, \pay)\) on top of the information structure, where \(\alloc: \signals \rightarrow [0,1]^m\) is the allocation rule and \(\pay: \signals \rightarrow \mathbb{R}\) is the payment rule.
    \item The value profile \(\val\) is drawn from \(\dist\) and is unknown to both parties. A signal \(\signal \sim \experiment(\val)\) is \emph{privately} revealed to the buyer.
    \item The buyer reports a signal \(\signal'\). The final allocation is determined by \(\alloc(\signal')\), and the buyer must make the corresponding payment \(\pay(\signal')\).
\end{enumerate}

\section{Illustrative Examples}
\label{sec:example}
In this section, we first provide two examples to illustrate the complications involved in solving the optimal joint design problem. These examples further motivate our focus on the approximate optimality of pricing mechanisms. Then, we use another example from \citet{hart2019selling} to show that the approximation guarantee of pricing mechanisms can be significantly improved under endogenous information structures. 

\paragraph{Complexity of Optimal Mechanisms}
In the joint design problem, the revenue of the seller is the difference between the allocation efficiency and the information rent enjoyed by the buyer. Thus, there are two intuitive ways to extract high revenue from the buyer: (1) reveal horizontal information to the buyer to sustain an efficient allocation; or (2) reveal coarse information to the buyer such that he has no information rent. 
\cref{exp:complex_info} shows that it is possible for neither option to be optimal for the seller. 

\begin{example}\label{exp:complex_info}
There are two products for sale, and the buyer has three possible types drawn from the following distribution:
\begin{equation*}
(v_1, v_2) =
\begin{cases}
  (10,5) & \text{with probability } 0.2, \\
  (6,5) & \text{with probability } 0.4,\\
  (0,3) & \text{with probability } 0.4.
\end{cases}
\end{equation*}
\end{example}

Specifically, if the seller is limited to mechanisms that induce efficient allocation, the seller must separate type $(0,3)$ from types $(10,5)$ and $(6,5)$. 
In this mechanism, the optimal information structure is to pool types $(10,5)$ and $(6,5)$ together. The expected revenue in this case is $4.4$ by setting prices $\frac{16}{3}$ and $3$ on products $1$ and $2$ respectively.
On the other hand, if the seller chooses a mechanism that reveals coarse information to leave no information rent for the buyer, this mechanism would pool all types together. The posterior mean values are $4.4$ and $4.2$ for products $1$ and $2$, respectively, which leads to an expected revenue of $4.4$.

In contrast, the seller can improve her expected revenue by pooling types $(6,5)$ and $(0,3)$ and posting prices of $9$ and $4$ for products $1$ and $2$, respectively. 
This distorts the efficient allocation as the mechanism sells product $2$ to type $(6,5)$, and it leaves the buyer with an interim utility of $1$ for type $(10,5)$.
However, the expected revenue in this case is $5$, which is strictly higher than $4.4$.

The main intuition in this example is that, although pooling types $(6,5)$ and $(0,3)$ leads to inefficient allocations, the efficiency loss is mild since type $(6,5)$ holds a similar value for both products. However, such pooling increases the buyer's posterior mean value of product~$2$. As a result, the seller can post a higher price on product $2$ for selling to the pooled types, which reduces the information rent of type $(10,5)$. The latter benefit outweighs the loss of efficiency in this example. 
Finally, further pooling of $(10,5)$ and $(0,3)$ is not desirable, since their values for the two products are very different. The reduction in information rent is not sufficient to cover the efficiency loss in this latter case. 

In the next example, we show that the revenue optimal mechanisms may also involve complex lotteries, even under endogenous information structures. 

\begin{example}\label{exp:lottery_opt}
There are three products for sale, and the buyer has two possible types drawn from the following distribution:
\begin{equation}
    (v_1, v_2, v_3) =
    \begin{cases}
      (0,20,9) & \text{with probability } \frac{1}{2}, \\
      (4,0,5) & \text{with probability } \frac{1}{2}.
    \end{cases}
  \end{equation}
\end{example}
We show that lotteries are very effective in this example by fully revealing the values to the buyer. 
Specifically, by offering product $2$ at a price of $20$ and a lottery that sells products $1$ and $3$ with equal probabilities at a price of $4.5$, the seller can extract revenue of $12.25$. 
However, if the seller is limited to pricing mechanisms, regardless of the information structure, either (1) the price of product 3 is low, and hence type $(0,20,9)$ enjoys a large information rent; or (2) the allocation for type $(4,0,5)$ is sufficiently inefficient. In either case, the expected revenue of the pricing mechanism is less than 12, which is suboptimal compared to the lottery mechanism. The details of the calculations for \cref{exp:lottery_opt} are provided in \cref{sub:Suboptimality_of_Item_Pricing}.

\paragraph{Better Approximation for Pricing Mechanisms}
The previous examples illustrate the complications involved in characterizing the optimal mechanisms. 
The main focus of our paper is to show that pricing mechanisms are approximately optimal in this joint design problem without knowing the structure of the optimal mechanisms in general. 
To illustrate the idea that endogenous information can make pricing mechanisms competitive, we present an example using the ideas from \citet{hart2019selling}.

\begin{example}
\label{example:HartNisan}
There are two products for sale. Let \(\varepsilon\in(0,1)\) be a sufficiently small parameter. 
Let \(\{(x_i,y_i)\}_{i=0}^n \) be a sequence of points sit on rays with evenly spaced angles and geometrically growing radii, i.e., $(x_i,y_i) = \rbr{\rbr{\frac{1}{\varepsilon}}^i \cos\rbr{\frac{i}{2n}\pi}, \rbr{\frac{1}{\varepsilon}}^i\sin\rbr{\frac{i}{2n}\pi}}$. The type distribution $\dist$ satisfies
\[
v = (x_i,y_i)  \text{ with probability } \frac{\varepsilon^i - \varepsilon^{i+1}}{1-\varepsilon^{n+1}}. 
\] 

\begin{figure}[t]
\centering
\begin{tikzpicture}[scale=3,>=latex]

\begin{scope}[xshift=0cm]
\draw[->] (0,0) -- (1.2,0) node[below] {$x$};
  \draw[->] (0,0) -- (0,1.2) node[left] {$y$};

\draw[very thick] (1,0) arc[start angle=0, end angle=90, radius=1];

\draw[dashed] (0,0) -- (1.1,1.1) node[above right] {$x=y$};

\foreach \a in {0,10,20,30,40}{
    \fill[blue] ({cos(\a)},{sin(\a)}) circle[radius=0.015];
  }
  \fill[black] ({cos(45)},{sin(45)}) circle[radius=0.018];
  \foreach \a in {50,60,70,80,90}{
    \fill[red] ({cos(\a)},{sin(\a)}) circle[radius=0.015];
  }

\node[blue] at (0.83,0.23) {\scriptsize $x>y$};
  \node[red]  at (0.23,0.87) {\scriptsize $x<y$};
\end{scope}

\draw[->,very thick] (1.45,0.6) -- node[above]{\scriptsize pooling} (2.25,0.6);

\begin{scope}[xshift=2.5cm]
\draw[->] (0,0) -- (1.2,0) node[below] {$x$};
  \draw[->] (0,0) -- (0,1.2) node[left] {$y$};

\draw[dashed] (0,0) -- (1.1,1.1) node[above right] {$x=y$};

\fill[blue] (0.85,0.35) circle[radius=0.03]
    node[below right=2pt] {\scriptsize aggregate of $x>y$};
  \fill[red]  (0.35,0.85) circle[radius=0.03]
    node[above =2pt]  {\scriptsize \hspace{10pt} aggregate of $x<y$};
\end{scope}
\end{tikzpicture}
\caption{Weighted value profile before and after pooling. }
\label{fig:weight_profile}
\end{figure}

\end{example}

The exact construction in \citet{hart2019selling} is considerably more involved. In our simplification, we specify a distribution via a single sequence of point masses such that their probability-weighted values lie on an arc, whereas \citet{hart2019selling} employ arcs of varying radii with a more delicate parameterization. Nevertheless, this simplified construction suffices to illustrate the main insight.

In particular, \citet{hart2019selling} show that the (approximately) optimal mechanism for this instance requires complex second-degree price discrimination, where the seller assigns a corresponding lottery allocation that is codirectional with each value profile. 
However, observe that the weighted value profiles are approximately evenly spaced on an arc of a unit circle: 
\[ \prob[v\sim \dist]{v = (x_i,y_i)} \cdot (x_i,y_i) \approx \rbr{\cos\rbr{\frac{i}{2n}\pi}, \sin\rbr{\frac{i}{2n}\pi}}. \] 
This ensures that by only disclosing horizontal information to the buyer, i.e., informing the buyer which product he prefers, 
the posterior mean value exhibits a negative correlation, as we illustrate in \Cref{fig:weight_profile}.
By posting a deterministic price equal to the posterior mean for each product conditional on it being the highest value product, we extract the full surplus. 
In \cref{lem:full_surplus}, we also provide a generalization of this observation to establish a sufficient condition for full surplus extraction using pricing mechanisms. 
Essentially, the high level idea here is that the optimally designed information structure eliminates the need to use lotteries to achieve a more efficient allocation while maintaining a low information rent for the buyer.

\section{Approximate Optimality of Pricing Mechanisms}
\label{sec:approximation}
Multidimensional screening is notoriously challenging. In environments with exogenous information structures, \citet{hart2019selling} show that when product values are correlated, even when there are only two products, approximating the optimal revenue remains hard.
Specifically, any mechanism that guarantees an $\epsilon$ fraction of the optimal revenue for any arbitrarily small constant $\epsilon > 0$ requires an unbounded number of menu options, each potentially involving complex lottery allocations.
This implies that simple deterministic pricing mechanisms can perform arbitrarily poorly relative to the optimal in correlated valuation environments with exogenous information (see \cref{prop:without_info} in \cref{subapx:without_info}).
 
In this section, we show that pricing mechanisms are approximately optimal when the seller can design the information structure through which the buyer learns his values, for any number of products and under arbitrary correlation. This stands in sharp contrast to the negative result of \citep{hart2019selling}.
This comparison illustrates a central insight of our paper: the ability to design information structures makes pricing mechanisms competitive in multi-product settings with correlated values.

\subsection{Approximately Optimal Information Structures}
\label{subsec:infor_structure}
We first introduce a class of information structures useful for our approximation results, which we term (coarse) horizontal disclosure. Although this class may not be optimal for the joint design problem, it is intuitive, and we show that the monopoly seller can restrict attention to it while still securing a good revenue guarantee. 

Specifically, let $i^*(\val)$ denote the product with the highest value under the valuation profile~$\val$, with ties broken arbitrarily.

\begin{definition}[(Coarse) Horizontal Disclosure]
\label{def:coarse_horizontal}
An information structure $(\signals,\experiment)$ is a \emph{horizontal disclosure} if $\signals = [m]$
and $\experiment(\val) = \delta_{i^*(\val)}$ for any $\val$, 
where $\delta_{i^*(\val)}$ is a deterministic distribution with realization $i^*(\val)$. 
An information structure $(\signals,\experiment)$ is a \emph{coarse horizontal disclosure} if $\signals = 2^{[m]}\setminus\{\emptyset\}$, and there exists a mapping $\tau: [m]\to \Delta(\signals)$ such that $\experiment(\val) = \tau(i^*(\val))$ for all~$\val$ and 
\[
\supp\big(\tau(i)\big)\subseteq \{\,s\in\signals:\; i\in s\,\}\;\;\text{for every }i\in[m].
\]
\end{definition}
In horizontal disclosure, the buyer is informed about the product that holds the highest value for them, with ties broken arbitrarily. Intuitively, under horizontal disclosure, the seller informs the buyer which product he prefers the most while keeping the buyer ignorant of the magnitude of his preference. 
It allows the buyer to discern the relative qualities among different products to improve allocation efficiency while keeping vertical information to a minimum in order to reduce information rent. 

Nonetheless, horizontal disclosure may still provide excessive information to the buyer due to correlations and, hence, leave a large information rent and low revenue.\footnote{Essentially, when there are correlations, revealing horizontal information may inevitably disclose undesirable vertical information as well, as we illustrate in the example in \cref{subapx:horizontal}.} In \cref{subapx:horizontal}, we provide a concrete example in correlated environments to show that the revenue under horizontal disclosure is almost negligible compared to the optimal. Thus, to achieve a good approximation guarantee, the information provided to the buyer sometimes needs to be coarser. 

Our definition of coarse horizontal disclosure addresses this issue by allowing the seller to disclose a set of products rather than just a single product. In this setting, the disclosed set is guaranteed to contain the buyer's most valuable product; however, it may also include others. Thus, the signal space consists of subsets of products, and the buyer is assured that his top-valued product lies within the disclosed set.
This provides less information to the buyer in the sense that any coarse horizontal disclosure is less informative than horizontal disclosure according to the Blackwell order. 
As an illustration, both horizontal disclosure (where each disclosed set is a singleton set) and revealing no information (where the disclosed set is the full set) are special cases of coarse horizontal disclosure, in which horizontal disclosure is the most informative, and revealing no information is the least informative.
We will show that coarse horizontal disclosure is sufficient to guarantee a good approximation to the optimal under any correlation structure.

\subsection{Tight Approximation of Uniform Pricing}
We first establish an approximation guarantee for uniform pricing. This special case of pricing mechanisms has an appealing practical feature: the seller posts a single price; if the buyer accepts, they may purchase any product they prefer based on the disclosed information. We show that uniform pricing attains at least one-half of the optimal revenue in the worst case, and that this bound is tight. Thus, securing a large fraction of optimal revenue does not require complex lotteries. Indeed, a single uniform price suffices despite ex-ante heterogeneity across products.

\begin{theorem}[Approximate Optimality of Uniform Pricing]
\label{prop:uniform_pricing}
For any number of products and any (potentially correlated) distribution $\dist$ over values, there exists a uniform pricing mechanism $\mech$ with a coarse horizontal disclosure such that the expected revenue is at least half of the optimal revenue, i.e., $\rev(\mech) \geq \frac{1}{2} \cdot \optrev$. Moreover, there are instances where a revenue loss of half is necessary for uniform pricing mechanisms. 
\end{theorem}

To prove this result, instead of directly comparing it to the optimal (and hard-to-characterize) revenue, we analyze a more tractable upper bound: the optimal welfare. The welfare benchmark is easier to compute and allows for a clean approximation argument. Later, we will show that the approximation guarantee is tight, even against the revenue benchmark. 

As a warm up, consider first a simpler setting with a single product. Without control over information, the seller may be forced to offer the buyer significant information rents to satisfy incentive constraints. For instance, suppose the buyer's value is drawn from the distribution with CDF $F(v) = 1 - \frac{1}{v}$ on $[1, H)$, and $F(v) = 1$ for $v \geq H$. If the value is fully revealed to the buyer, the optimal price is $1$, yielding revenue $1$ and leaving the buyer utility $\ln H$. But if the seller reveals no information and charges the buyer the expected value (i.e., $1 + \ln H$), the seller captures all the surplus. The ability to design information structures, in this case, dramatically improves revenue, especially when $H$ is large.

With multiple products, the logic described above does not directly apply. If the seller discloses no information, the buyer does not know which product to choose, leading to potential misallocation and significant loss of efficiency, especially when the number of products is large. To achieve a good approximation of welfare, the seller must reveal certain information to guide the buyer toward high-value matches. Our results will show that there is a way to partially reveal information such that the allocation efficiency remains close to optimal welfare while almost all the rent can be extracted by the seller. 

Here's the key insight: by restricting attention to uniform pricing, there is no product-level price discrimination; thus, the seller is indifferent about which product is sold. This removes cannibalization concerns and simplifies the problem: the seller just needs to maximize the probability of sale at a fixed price.

Define $p^*$ as the highest uniform price for which there exists an information structure that guarantees the probability of selling one of the products to the buyer is equal to one. Intuitively, the information structure pools types whose highest value exceeds price~$p^*$ with those whose highest value falls below~$p^*$. In fact, it is sufficient to consider coarse horizontal disclosure for this purpose. 

Note that the expected revenue, given the price $p^*$, is also $p^*$. Two critical facts underpin the analysis:
\begin{enumerate}
    \item Types with a maximum value strictly above $p^*$: The mechanism extracts nearly the full surplus from these types, as any positive utility of the buyer would imply an opportunity to raise the price while maintaining a sale probability of 1 by further pooling in lower-value types, thereby increasing revenue. This implies that their welfare contribution is at most the revenue of $p^*$.
    \item Types with a maximum value weakly below $p^*$: Their welfare contribution is upper bounded by $p^*$, since no product exceeds this price.
\end{enumerate}

Putting these together: the revenue at price $p^*$ is at least half of the total welfare. This guarantees a 2-approximation to the welfare---and hence to the optimal revenue---using a simple uniform price mechanism. In essence, information design allows uniform pricing mechanisms to extract more rent from high-value types, but this comes at the expense of allocation efficiency by not selling to types whose maximum value is below the uniform price. Fortunately, the loss in the latter is proven to be not too large. 
The details of the proof are provided in \cref{apx:proofs}.

\paragraph{A Tight Example}
Next, we show that the approximation guarantee of half of the optimal revenue is tight for uniform pricing mechanisms by constructing a simple example with two products.
\begin{example}
There are two products for sale, and the buyer has two possible types drawn from the following distribution parameterized by $\epsilon > 0$:
\begin{equation*}
(v_1, v_2) =
\begin{cases}
  (0,1) & \text{with probability } 1-\epsilon, \\
  (\frac{1}{\epsilon},1) & \text{with probability } \epsilon.
\end{cases}
\end{equation*} 
\end{example}
In this example, the optimal mechanism is to fully disclose the value to the buyer, charge a price $\frac{1}{\epsilon}$ for the first product, and a price $1$ for the second product. 
The seller extracts the full surplus in this mechanism, with expected revenue equal to $2-\epsilon$. 

Now suppose the seller is restricted to uniform pricing mechanisms. 
In this case, if the price is at most 1, then the expected revenue given any information structure is at most~1. 
If the price is strictly larger than 1, given any information structure, the second product cannot be sold to the buyer, as the buyer has a fixed value of 1 on product $2$ for all types. 
In this case, the expected welfare from selling the product to the buyer is at most 1, which implies that the expected revenue is at most 1. 
Therefore, the optimal revenue from uniform pricing is at most~1, 
and the approximation ratio of uniform pricing in this example is $2-\epsilon$. 
By taking $\epsilon$ to 0, the worst-case gap is 2. 

The simple intuition behind the tight example is the presence of two distinct buyer types: one occurs with low probability but has a very high maximum value, while the other occurs with high probability but has a much lower maximum value. Importantly, both types contribute equally to the overall welfare. If the uniform price is set too low, the seller captures little revenue from the high-value type compared to its welfare. Conversely, if the price is set too high, the low-value type is entirely excluded, contributing nothing to welfare. As a result, any uniform pricing mechanism must trade off between these two extremes and cannot guarantee more than half of the optimal welfare. In contrast, pricing mechanisms extract the full surplus in this example.

\subsection{Improved Approximation of Pricing Mechanisms}
Intuitively, when products are heterogeneous, the seller can increase revenue by adopting different prices for different products. Generally speaking, pricing mechanisms enable more effective rent extraction by aligning prices with the individual values of different products. In this section, we show that pricing mechanisms strictly outperform uniform pricing mechanisms in the worst case over valuation distributions. Specifically, we show that the seller can guarantee at least $50.17\%$ of the optimal revenue using pricing mechanisms with at most two distinct prices.

Unlike \cref{prop:uniform_pricing}, this approximation factor is not tight for pricing mechanisms. We did not attempt to optimize the approximation factor for pricing mechanisms in this paper, as doing so would introduce significant tractability challenges. Instead, our improved approximation result in \cref{thm:approx_opt} primarily serves to illustrate that using multiple prices allows the seller to strictly improve the approximation guarantee beyond what is achievable with uniform pricing. The proof of \cref{thm:approx_opt} is provided in \cref{apx:proofs}.

\begin{theorem}[Improved Approximation of Pricing Mechanisms]
\label{thm:approx_opt}
For any number of products and any distribution $\dist$ over values, there exists a pricing mechanism $\mech$ with a coarse horizontal disclosure that achieves at least $50.17\%$ of the optimal revenue, i.e., $\rev(\mech) \geq 0.5017\cdot\optrev$. 
\end{theorem}
The intuition behind the improved worst-case approximation of pricing mechanisms can be understood by examining the tight example for uniform pricing. Specifically, if uniform pricing already achieves better than a half approximation given the instance, then pricing mechanisms can only perform better. For the more challenging case in which uniform pricing mechanisms achieve only half of the optimal revenue without significantly exceeding this amount, the buyer's value distribution must closely resemble that of the worst-case instance. In such cases, with high probability, the buyer's maximum value is concentrated near a particular price point, while with low probability, the buyer's maximum value is significantly higher. 
Pricing mechanisms can exploit this structure by pricing those high-value products slightly below their posterior means.
This allows the seller to extract a greater fraction of the surplus from high-value types while maintaining near-optimal revenue from the more frequent low-value types. Such improvements, as illustrated, can be achieved with just two distinct prices.

\section{Discussions and Extensions}
\label{sec:discussion}
In the previous section, we showed that both uniform pricing and pricing mechanisms are competitive when the seller has control over the information structure. We also established that pricing mechanisms strictly outperform uniform pricing mechanisms in the worst-case scenario. 
However, the bound we obtained for pricing mechanisms (\cref{thm:approx_opt}) is not tight in general, and in this section, we further show that the performance of pricing mechanisms exceeds the previously established lower bound in several canonical environments, e.g., environments with two products, environments with negatively affiliated values, and environments with exchangeable values. This further strengthens our main conclusion that pricing mechanisms are competitive under endogenous information disclosure.
All proofs in this section are provided in \cref{apx:proofs}.

\subsection{Approximations with Two Products}
\label{subsec:twoitems}

In this section, we show that pricing mechanisms yield a tight \(\frac{3}{2}\)-approximation to the optimal welfare in the case of two products (i.e., the mechanism attains at least two-thirds of the optimal welfare). 
Since welfare is a natural upper bound for optimal revenue, this automatically implies that pricing mechanisms can guarantee at least a \(\frac{2}{3}\) fraction of the optimal revenue when there are only two products. 

\begin{proposition}\label{prop:two_item_approx}
For any distribution over valuations for two products, there exists a pricing mechanism with coarse horizontal disclosure that achieves at least a \(\frac{2}{3}\) fraction of the optimal welfare. Moreover, this bound is tight.
\end{proposition}
Essentially, we show that when there are only two products, either (i) revealing no information and selling the product with the highest a-priori value; or (ii) horizontal disclosure with optimal prices, already guarantees a good approximation to the optimal welfare. The bound obtained from this relaxed analysis is tight with respect to the welfare benchmark, even when more complex mechanisms and information structures are allowed.

\subsection{Full Surplus Extraction}
\label{subsec:fullsurplus}
\label{sub:full_surplus_multi}

In this section, we provide a sufficient condition on the correlation structures of the value distributions such that, when the seller has joint design power over information structures and selling mechanisms, pricing mechanisms are revenue-optimal for the seller. Moreover, we show that the optimal information structures in this case can be captured by horizontal disclosure (see \cref{def:coarse_horizontal}), and the optimal mechanism extracts the full surplus. The following lemma demonstrates the sufficient condition.
In \cref{subapx:full_surplus_multi}, we show that this sufficient condition is also necessary under a mild generic condition. 

\begin{lemma}\label{lem:full_surplus}
For any number of products and for any distribution $\dist$, 
there exists a pricing mechanism $\mech$ with horizontal disclosure that extracts the full surplus, 
i.e., $\rev(\mech) = \optwel$,
if for any $i\neq i'$, we have 
\begin{align*}
\expect[\dist]{ \val_i \given i^*(\val) = i} 
\geq \expect[\dist]{ \val_i \given i^*(\val) = i'}.
\end{align*}
\end{lemma}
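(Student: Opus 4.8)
The plan is to characterize exactly when horizontal disclosure plus item pricing can extract the full surplus $\optwel$. First note that under any horizontal disclosure $(\signals,\experiment)$ with $\signals=[m]$, the signal $i^*$ tells the buyer that $i^*\in\argmax_i \val_i$; after receiving signal $i$, the buyer's posterior mean for item $i$ is $\belief_i(i)=\expect[\dist]{\val_i\given i^*(\val)=i}$ (modulo the tie-breaking rule in the definition, which I will fix so that ties are broken consistently, e.g., toward the lowest index, so that the events $\{i^*(\val)=i\}$ partition $\vals$). The welfare-maximizing allocation always assigns the buyer an item in $\argmax_i\val_i$, so horizontal disclosure already preserves full allocation welfare: conditional on signal $i$, giving the buyer item $i$ realizes ex-post value $\val_i$ which is a maximizer. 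Hence $\optwel=\sum_i \sigprobsingle(i)\,\belief_i(i)$ for this disclosure, and full surplus extraction is equivalent to charging, for the menu entry $(\delta_i,\pay_i)$, exactly $\pay_i=\belief_i(i)$, while having the buyer with signal $i$ still (weakly) prefer entry $i$ over every other entry $i'$ and over the outside option.

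The outside option constraint is automatic since $\belief_i(i)-\pay_i=0$. The binding content is the IC constraint: a buyer who received signal $i$ must not prefer entry $i'$, i.e.
\[
\belief_i(i)-\pay_i \;\ge\; \belief_{i'}(i)-\pay_{i'},
\]
which, after substituting $\pay_j=\belief_j(j)$ and using $\belief_i(i)-\pay_i=0$, becomes
\[
\belief_{i'}(j)\big|_{j=i} \le \belief_{i'}(i') \quad\Longleftrightarrow\quad \expect[\dist]{\val_{i'}\given i^*(\val)=i}\le \expect[\dist]{\val_{i'}\given i^*(\val)=i'}.
\]
Renaming indices, this is precisely the stated inequality: for all $i\neq i'$, $\expect[\dist]{\val_i\given i^*(\val)=i}\ge \expect[\dist]{\val_i\given i^*(\val)=i'}$. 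So for the ``only if'' direction I argue the contrapositive: if the inequality fails for some pair, then in the full-surplus-extracting pricing the IC constraint is violated, the buyer deviates, and the realized revenue drops below $\optwel$; and one must also argue no \emph{other} horizontal disclosure or pricing can do better — here I use that any mechanism's revenue is at most $\optwel$ and that extracting $\optwel$ forces zero utility on every signal, which forces the prices to equal the posterior means item by item, reducing to the same inequality. For the ``if'' direction, given the inequality, the displayed IC and IR constraints hold, so the item pricing $\pay_i=\belief_i(i)$ with horizontal disclosure (ties broken by index) is IC-IR and collects $\sum_i\sigprobsingle(i)\belief_i(i)=\optwel$.

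The main obstacle I anticipate is handling the tie-breaking and the ``only if'' direction cleanly: the definition of horizontal disclosure allows any distribution over $\argmax_i\val_i$, and the mechanism's own tie-breaking is ``highest payment''. I need to ensure that (a) when I construct the extracting mechanism I choose a deterministic horizontal disclosure so the conditional expectations $\belief_i(i)$ are well-defined on a genuine partition, and (b) for the necessity direction I show that \emph{no} choice of horizontal disclosure (including randomized ones over the argmax) together with item pricing can extract full surplus unless the inequality holds — the subtlety is that a randomized tie-split could in principle shift posterior means, so I should argue that full surplus extraction still forces, for each realized signal $i$ with allocation $\delta_i$, price exactly equal to that signal's posterior mean $\belief_i(i)$ and zero rent, and then the cross-signal IC constraint again collapses to comparing $\belief_{i'}(i)$ with $\belief_{i'}(i')$; a short averaging/convexity argument over how mass at tie profiles is split shows the partition (non-randomized) case is the relevant one. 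Everything else is bookkeeping with the menu representation and the IC/IR inequalities already set up in \Cref{sec:model}.
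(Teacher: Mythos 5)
Your proof is correct and follows essentially the same route as the paper: for the ``if'' direction you set $\pay_i = \expect[\dist]{\val_i \given i^*(\val)=i}$ and verify that the cross-signal IC constraint $\belief_i(i)-\pay_i \ge \belief_{i'}(i)-\pay_{i'}$ collapses to the stated inequality; for the ``only if'' direction you note that full surplus extraction forces zero rent on every signal, which pins down $\pay_i=\belief_i(i)$, and then the same IC constraint yields the condition. This is exactly what the paper does, with the paper merely asserting the price identity in the ``only if'' step while you supply the zero-rent justification. The additional subtlety you raise about randomized tie-breaking over $\argmax_i \val_i$ in the horizontal-disclosure definition is real but is also glossed over in the paper (the lemma statement itself uses $i^*(\val)$ as if it were a deterministic selection), so it does not represent a gap in your argument relative to theirs.
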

Essentially, \cref{lem:full_surplus} shows that full surplus extraction under pricing mechanisms is feasible if, for each product $i$, the posterior mean of product $i$ is maximal conditional on the signal indicating that $i$ has the highest realized value among all products. When this holds, posting a price equal to the posterior mean under that signal for each product $i$ automatically satisfies incentive compatibility and extracts the full surplus. 
This condition does not hold in general; for example, it may fail under asymmetric distributions with positive correlation. We provide a concrete counterexample in \cref{subapx:no_full_surplus}, where the condition fails and full surplus extraction is impossible.

To better understand the results, in this section, we provide two additional sufficient conditions on the primitives such that the conditions in \cref{lem:full_surplus} are satisfied. 
We first show that it is satisfied for a general family of negatively correlated distributions. 

\begin{definition}[Negative Affiliation]
\label{def:negative_affiliation}
A distribution $\dist$ is \emph{negatively affiliated} if its density function $\density$ satisfies the log-submodularity property. That is, for any pair of value profiles $\val,\val'\in\reals^m_+$, 
we have 
\begin{align*}
\density(\val) \cdot \density(\val')
\geq \density(\val\wedge\val') \cdot \density(\val\vee\val'), 
\end{align*}
where $\val\wedge\val'$ is the component-wise minimum of $\val$ and $\val'$, and $\val\vee\val'$ is the component-wise maximum.
\end{definition}
The negative affiliation condition is the opposite of the positive affiliation condition in \citet{milgrom1982theory}. It assumes a form of weakly negative correlation that satisfies the well-known Fortuin–Kasteleyn–Ginibre (FKG) lattice condition~\citep{fortuin1971correlation} and includes independent values as special cases.\footnote{The standard FKG inequality only applies for log-supermodular distributions. As we will show in \cref{subapx:full_surplus_multi}, we can redefine the meet and join operations such that the distribution is log-supermodular in the new lattice space, where the FKG inequality is applicable.} 
Intuitively, with negative correlation, the posterior mean value of product $i$ should be higher when product $i$ has the highest value compared to when product $j$ has the highest value.

\begin{proposition}[Full Surplus Extraction under Negative Affiliation]
\label{thm:full_surplus_multi}
For any number of products and for any distribution~$\dist$ that is negatively affiliated, 
there exists a pricing mechanism~$\mech$ with horizontal disclosure that extracts the full surplus, 
i.e., $\rev(\mech) = \optwel = \optrev$. 
\end{proposition}

In our model, in addition to negatively correlated distributions, it is also possible for some positively correlated distributions to satisfy the conditions in \cref{lem:full_surplus}. 
Our next sufficient condition is exchangeability, which includes positively correlated distributions. 

\begin{definition}[Exchangeable distribution]
Let \(X=(X_1,\ldots,X_m)\sim \dist\) be a random valuation vector over \(m\) items. 
We say that \(\dist\) is \emph{exchangeable} if it is invariant under coordinate permutations: for every permutation \(\pi\),
\[
(X_1,\ldots,X_m)\stackrel{d}{=}(X_{\pi(1)},\ldots,X_{\pi(m)}).
\]
\end{definition}
This condition also appears in \citet{deb2021multi}. It essentially assumes a symmetry across products. 
Under exchangeability, the posterior mean value of each product $i$, conditioning on it being the highest value product, is the same for all products, and is higher than conditioning on any other signal. 

\begin{proposition}[Full Surplus Extraction under Exchangeability]
\label{thm:full_surplus_exchange}
For any number of products and for any distribution~$\dist$ that is exchangeable, 
there exists a pricing mechanism~$\mech$ with horizontal disclosure that extracts the full surplus, 
i.e., $\rev(\mech) = \optwel = \optrev$. 
\end{proposition}

\section{Conclusions}
\label{sec:conclusion}
In this paper, we show that, unlike in classic multi-product settings where simple mechanisms fail to provide any approximation guarantees for correlated distributions, the additional power of information design enables simple deterministic pricing mechanisms to achieve a constant-factor approximation to the optimal revenue.
The high-level intuition is that the worst-case distributions, which make pricing mechanisms a poor approximation, do not arise endogenously under optimal information design. Our paper also leaves several important open questions that warrant further investigation. 
First, an immediate open question is to characterize the exact approximation guarantee of pricing mechanisms for a general number of products. We conjecture that the bound of $\frac{2}{3}$ is tight against the benchmark of welfare. 
Moreover, our paper provides an example showing that lotteries are necessary when there are at least three products. However, in the important special case of two products, we conjecture that pricing mechanisms are exactly optimal. Resolving this open question may require novel techniques for characterizing the optimal revenue under information disclosure.
Finally, it would be interesting to extend our results to other environments, such as settings where the buyer has more complex combinatorial valuations.

\newpage

\bibliographystyle{apalike}
\bibliography{ref,reference}
\newpage
\appendix

\section{Missing Proofs}
\label{apx:proofs}
\subsection{Approximation Without Information Design}
\label{subapx:without_info}
\begin{proposition}[Proposition A.12 of \citet{hart2019selling}]
\label{prop:without_info}
For any $\epsilon>0$, even with two products, there exists a distribution over values and a fully revealing information structure of the buyer such that the expected revenue from pricing mechanisms is less than $\epsilon$ fraction of the optimal revenue. 
\end{proposition}

\subsection{Approximation of Uniform Pricing Mechanisms}
\begin{proof}[Proof of \cref{prop:uniform_pricing}]
For any uniform price $\pay>0$, let
\begin{align*}
I_+(\pay) = \lbr{i\in[m] : \expect[\dist]{\val_i \given i=i^*(\val)} \geq \pay}.
\end{align*}
That is, by only disclosing to the buyer on which product he values the most, $I_+(\pay)$ is the set of products such that the posterior value of the buyer on the highest value product is at least $\pay$. 
Let $I_-(\pay) = [m]\backslash I_+(\pay)$. 
Moreover, let 
\begin{align*}
\wel_+(\pay) = \expect[\dist]{\max_i \val_i \cdot\indicate{i^*(\val)\in I_+(\pay)}}
\quad\text{and}\quad
\wel_-(\pay) = \expect[\dist]{\max_i \val_i \cdot\indicate{i^*(\val)\in I_-(\pay)}}.
\end{align*}

For any uniform price $\pay>0$, consider a specific family of coarse horizontal disclosure information structure (see \cref{def:coarse_horizontal}) $(\signals_{\pay},\experiment_{\pay})$ 
induced by $\signals_{\pay} = 2^{[m]}$ and a mapping $\tau: [m] \to \Delta(2^{[m]})$
where
\begin{enumerate}
\item for any product $i\in I_+(\pay)$, $\tau(i)$ is deterministic, and for any $j\in \tau(i)$ and $j\neq i$, we have $j\not\in I_+(\pay)$;
\item \label{bullet:constraint} for any product $i\in I_+(\pay)$, the posterior value on product $i$ for receiving signal $\tau(i)$ is at least~$\pay$, i.e., 
\begin{align}
\expect[\experiment_{\pay},\dist]{\val_i \given \tau(i)} \geq \pay.\label{eq:constraint_posterior}
\end{align}
\end{enumerate}
Essentially, the information structure constructed above randomly pools types whose highest value lies below $\pay$ (i.e., whose highest-value product is in $I_-(\pay)$) with types whose highest value lies above $\pay$ (i.e., in $I_+(\pay)$), subject to the requirement that, after pooling, the buyer's posterior mean for every product $i \in I_+(\pay)$ remains weakly above $\pay$ upon receiving the signal $\tau(i)$ in the coarse horizontal disclosure. This ensures that, for all types with highest value above $\pay$, at least one product is sold at price $\pay$ after coarse horizontal disclosure.

Note that this description does not uniquely determine the coarse horizontal disclosure policy. Nevertheless, to maximize the probability that at least one product is sold, the coarse horizontal disclosure should pool below-$\pay$ types with above-$\pay$ types as much as possible, subject to the posterior constraint in \Cref{eq:constraint_posterior}.

Let $\pay^*$ be the maximum price such that there exists a coarse horizontal disclosure information structure $(\signals_{\pay^*},\experiment_{\pay^*})$ defined above under which a product is sold to the buyer with probability one. 
For any $\epsilon > 0$, we first show that for $(\signals_{\pay^*+\epsilon},\experiment_{\pay^*+\epsilon})$
that maximizes the selling probability, which is strictly less than 1 by the definition of $\pay^*$, we have 
\begin{align*}
\wel_+(\pay^*+\epsilon) 
&= \expect[\dist]{\sum_{i\in I_+(\pay^*+\epsilon)} \val_i \cdot\indicate{i^*(\val)=i}}\\
&\leq \expect[\experiment_{\pay^*+\epsilon},\dist]{\sum_{i\in I_+(\pay^*+\epsilon)} (\pay^*+\epsilon) \cdot\indicate{i\in\signal}}
\leq \pay^*+\epsilon.
\end{align*}
The first inequality holds since for any product $i \in I_+(\pay^*+\epsilon)$, the posterior value on product $i$ for receiving signal $i$ is exactly $\pay^*+\epsilon$. This is because otherwise we can further pool types with highest value product in $I_-(\pay^*+\epsilon)$ with those with highest value product in $I_+(\pay^*+\epsilon)$ to strictly increase the selling probability. 
Moreover, it is easy to show that 
\begin{align*}
\wel_-(\pay^*+\epsilon) = \expect[\dist]{\max_i \val_i \cdot\indicate{i^*(\val)\in I_-(\pay^*+\epsilon)}} \leq \pay^*+\epsilon.
\end{align*}

Let $\mech$ be the uniform pricing mechanism that sells the product using uniform price~$\pay^*$
in which the information structure $(\signals_{\pay^*},\experiment_{\pay^*})$ is chosen as a coarse horizontal disclosure information structure such that a product is sold with probability one. 
For any $\epsilon > 0$, we have
\begin{align*}
\optwel = \wel_+(\pay^*+\epsilon) + \wel_-(\pay^*+\epsilon) 
\leq 2(\pay^* + \epsilon)
= 2(\rev(\mech) + \epsilon).
\end{align*}
Since the above inequality holds for any $\epsilon > 0$, 
we have 
\begin{equation*}
\rev(\mech) \geq \frac{1}{2}\optwel \geq \frac{1}{2}\optrev.
\end{equation*}
Finally, the tightness of the approximation is shown in the tight example in \cref{sec:approximation}.
\end{proof}

\subsection{Approximation of Pricing Mechanisms}
\begin{proof}[Proof of \Cref{thm:approx_opt}]
We prove a stronger statement by showing that $\rev(\mech) \geq 0.5017\cdot\optwel$. 
This immediately implies \Cref{thm:approx_opt} since $\optwel\geq \optrev$.

We analyze the approximation ratio in two different cases. 
Let $\pay^*$ be the maximum price such that there exists a coarse horizontal disclosure information structure $(\signals_{\pay^*},\experiment_{\pay^*})$ under which a product is sold to the buyer with probability one. 
Recall that $I_+(\pay^*)$ is the set of products such that the posterior value of the buyer on the highest value product is at least $\pay^*$ when the seller only discloses to the buyer on which product he values the most. 
Let $I_-(\pay^*) = [m]\backslash I_+(\pay^*)$. 
Moreover, let 
\begin{align*}
\wel_+(\pay^*) = \expect[\dist]{\max_i \val_i \cdot\indicate{i^*(\val)\in I_+(\pay^*)}}
\quad\text{and}\quad
\wel_-(\pay^*) = \expect[\dist]{\max_i \val_i \cdot\indicate{i^*(\val)\in I_-(\pay^*)}}.
\end{align*}
Let $\belief_i(i')\triangleq \expect[\dist]{\val_{i} \mid i^*(\val) = i'}$ be the posterior mean value of product $i$ when the horizontal disclosure sends a signal indicating that product $i'$ has the highest value. 

\paragraph{Low social welfare:} In the first case, there exists $\epsilon > 0$ such that $\wel_-(\pay^*+\epsilon)\leq (1-\delta)\pay^*$. 
Note that this implies that the same inequality holds for all $\epsilon' \leq \epsilon$.
Therefore, in this case, by using a uniform price $\pay^*$, 
the expected revenue is $\pay^*$ and the approximation ratio is $(2-\delta)$
since the optimal welfare is at most $(2-\delta)\pay^*$. 

\paragraph{High social welfare:} In the second case, $\wel_-(\pay^*+\epsilon) > (1-\delta)\pay^*$ for all $\epsilon > 0$.
By Markov's inequality (applied on $\pay^*-\val$), for any $\hat{\delta}\in(0,1)$, with probability at least $\frac{1-\delta-\hat{\delta}}{1-\hat{\delta}}$, the highest value of a type $\val\in \wel_-(\pay^*)$ is at least $\hat{\delta}\pay^*$. 
This further implies that the probability that the highest value of a type $\val\in \wel_+(\pay^*)$ is at most~$\delta$. 
We first pool the types such that the highest posterior value is at most $kp^*$. 
Note that the expected welfare from $I_+$ given $\belief_i(i)\geq k p^*$ is at least $(1-(k+1)\delta)p^*$ and at most $p^*$.
Therefore, the probability a type in $I_-$ that is not pooled in this process is at least $1-\frac{1}{k}$.

Let $I^k_+$ be the set of products such that conditional on $i$ being the highest value product, there exists $i'\in I_-$ such that $kp^*\leq \belief_i(i) \leq 2\belief_{i'}(i)$. 
Consider a bipartite graph between $I^k_+$ and $I_-$ where $i\in I^k_+$ is connected with $i'\in I_-$ if $\belief_{i'}(i')\geq \hat{\delta}p^*$ and $\belief_{i'}(i)\geq \frac{1}{2}\belief_i(i)$. 
In this bipartite graph, the outflow for any $i\in I^k_+$ is at most the probability of signal $i$ and the inflow for any $i'\in I_-$ is at most the probability of signal $i'$ times $c\triangleq\frac{\frac{k}{2}-\frac{2}{2-\delta}}{\frac{2}{2-\delta}-\hat{\delta}}$. 
Now consider a maximum flow for the given bipartite graph, and let $w$ be the total amount of flow. 
Moreover, let $I^k_-$ be the set of products such that in the maximum flow, the inflow of the product reaches its maximum capacity. 
Let 
\begin{align*}
    \underline{w}\triangleq \frac{\frac{2}{2-\delta}(1+\frac{1-(k+1)\delta}{k}) - 1 + (k+1)\delta}{c+1-k}.
\end{align*}

We divide the analysis into two sub-cases.
\begin{enumerate}
    \item In the maximum flow, we have $w\geq \underline{w}$.
    In this case, it is easy to verify that there exists a signal structure which first pooling types according to the maximum flow, such that with probability 1, the posterior value of maximum value product given any signal is at least $\frac{2p^*}{2-\delta}$. Therefore, given uniform price $\frac{2p^*}{2-\delta}$, a product is sold with probability 1, and hence the approximation ratio is at most $2-\delta$.

    \item In the maximum flow, we have $w< \underline{w}$.
    In this case, consider the mechanism that set price $\frac{kp^*}{2}$ for products in $I_+ \cup I^k_-$ and price $\hat{\delta}p^*$ for the rest of the products. 
    Note that given this mechanism, types with signals in $I_+$ that are not in the maximum flow always weakly prefer purchasing a product with price $\frac{kp^*}{2}$ with posterior value $kp^*$ rather than purchasing a product with price $\hat{\delta}p^*$ with value at most $\frac{kp^*}{2}$. Therefore, the expected revenue is at least 
    \begin{align*}
        \rbr{\frac{1}{2}\rbr{1-(k+1)\delta-\underline{w}k} + \rbr{1-\frac{2}{k}-c\underline{w}}\hat{\delta}}p^*
    \end{align*}
\end{enumerate}
By setting $\delta = 0.0068, \hat{\delta} = 0.925$ and $k=11$, 
we have 
\begin{align*}
\rbr{\frac{1}{2}\rbr{1-(k+1)\delta-\underline{w}k} + \rbr{1-\frac{2}{k}+c\underline{w}}\hat{\delta}}p^*
> 1.007
> \frac{2}{2-\delta}.
\end{align*}
Therefore, the approximation ratios in all cases are at most $2-\delta = 1.9932$. 
\end{proof}

\subsection{Improved Approximation for Two Products}
\label{subapx:two_item_approx}
In fact, we can show a stronger result than what is stated in \cref{prop:two_item_approx}. Specifically, we show that not only pricing mechanisms, but indeed no mechanism can guarantee an approximation factor better than $\frac{3}{2}$ in certain instances.
\begin{lemma}
\label{lem:two_item_hardness}
For any \(\varepsilon > 0\), there exists an instance with two products where the optimal revenue is at most \(\frac23 + \varepsilon\) fraction of the optimal welfare.
\end{lemma}
\begin{proof}
In the following, we show that the \(\frac23\) approximation is tight, even for general mechanisms. Fix any \(\varepsilon > 0\). Consider the following instance
\begin{equation*}
(v_1, v_2) =
\begin{cases}
  (1,0) & \text{with probability } 1 - \varepsilon, \\
  \left(\frac{1}{\varepsilon},\frac{2}{\varepsilon}\right) & \text{with probability } \varepsilon.\\
\end{cases}
\end{equation*}
It is clear that optimal welfare in this instance is \(3 - \varepsilon\). Next, we show that no mechanism can achieve a revenue higher than \(2 + 2\varepsilon\) in this instance.

Observe that the instance we construct has a support size of \(2\). Suppose the optimal revenue is obtained by the selling mechanism $(\alloc^*,\pay^*)$ together with the optimal information structure \((\signals^{*}, \experiment^*)\). Fixing the selling mechanism $(\alloc^*,\pay^*)$, the information structure constitutes the optimal solution to a Bayesian persuasion problem with binary states since the constructed value distribution has support size two. Consequently, the optimal information structure involves at most two signals \citep{kamenica2011bayesian}. Denote these two signals as \(\signal_1\) and \(\signal_2\). Define
\[
\begin{aligned}
    \Pr[\experiment(\val) = \signal_1 \mid \val = (1, 0)] &= x_1 / (1 - \varepsilon), \\
    \Pr[\experiment(\val) = \signal_2 \mid \val = (1, 0)] &= x_2 / (1 - \varepsilon), \\
    \Pr\left[\experiment(\val) = \signal_1 \mid \val = \left(\tfrac{1}{\varepsilon}, \tfrac{2}{\varepsilon}\right)\right] &= y_1, \\
    \Pr\left[\experiment(\val) = \signal_2 \mid \val = \left(\tfrac{1}{\varepsilon}, \tfrac{2}{\varepsilon}\right)\right] &= y_2,
\end{aligned}
\]
where
\[
x_1 + x_2 = 1 - \varepsilon, \quad y_1 + y_2 = 1, \quad \text{and} \quad x_1, x_2, y_1, y_2 \geq 0.
\]
For each signal \(\signal_i\), the buyer's ex-ante probability of receiving signal \(\signal_i\) is given by
\[
\sigprobsingle(\signal_i) = x_i + \varepsilon y_i.
\]
Moreover, the buyer's posterior belief upon receiving signal \(\signal_i\) can then be expressed as:
\[
\begin{aligned}
    \belief_1(\signal_i) &= \frac{x_i + y_i}{x_i + \varepsilon y_i}, \\
    \belief_2(\signal_i) &= \frac{2y_i}{x_i + \varepsilon y_i}.
\end{aligned}
\]

Since \(y_1 + y_2 \geq x_1 + x_2\), there must exist some \(i \in \{1, 2\}\) such that \(y_i \geq x_i\). Without loss of generality, assume that signal \(\signal_1\) satisfies \(y_1 \geq x_1\).

In the following, we demonstrate that under the buyer's updated belief upon receiving this signal, the revenue is upper bounded by \(2 + 2\varepsilon\). To establish this, we apply the duality framework to derive an upper bound on the optimal revenue \citep[see][]{cai2016duality}. 
Specifically, consider a flow from signal \(\signal_1\) to signal \(\signal_2\) with total flow mass \(\sigprobsingle(\signal_1)\). Under such flow, the virtual value associated with \(\signal_1\) is simply its posterior belief, while the virtual value corresponding to \(\signal_2\) can be expressed as:
\begin{align*}
    \varphi_i\rbr{\signal_2} = \rbr{\belief_i\rbr{\signal_2} - \frac{\sigprobsingle\rbr{\signal_1}}{\sigprobsingle\rbr{\signal_2}} \left(\belief_i\rbr{\signal_1} - \belief_i\rbr{\signal_2}\right)}.
\end{align*}

By Theorem~6 in \citet{cai2016duality}, the optimal revenue is upper bounded by the expected maximum of the virtual values. That is,
\[
\optrev \leq \sum_{i=1}^2 \sigprobsingle(\signal_i) \max\left\{ \varphi_1(\signal_i), \varphi_2(\signal_i) \right\}.
\]

For signal \(\signal_1\), since we assume \(y_1 \geq x_1\), it follows that
\[
\varphi_2(\signal_1) = \belief_2(\signal_1) \geq \belief_1(\signal_1) = \varphi_1(\signal_1).
\]
Now consider signal \(\signal_2\). Suppose \(y_2 \geq x_2\). Then we have:
\[
\begin{aligned}
    \optrev &\leq \sigprobsingle(\signal_1) \varphi_2(\signal_1) + \sigprobsingle(\signal_2)\max\left\{ \varphi_1(\signal_2), \varphi_2(\signal_2) \right\} \\
    &\leq \sigprobsingle(\signal_1) \belief_2(\signal_1) + \sigprobsingle(\signal_2) \belief_2(\signal_2) \\
    &\leq 2y_1 + 2y_2 \leq 2,
\end{aligned}
\]
where the second inequality holds because \(\max\{\varphi_1(\signal_2), \varphi_2(\signal_2)\} \leq \max\{\belief_1(\signal_2), \belief_2(\signal_2)\} = \belief_2(\signal_2)\).

Finally, consider the case where \(y_2 < x_2\). Observe that \(\belief_1(\signal_2) \leq 2\). If \(\varphi_1(\signal_2) \leq \varphi_2(\signal_2)\), the same argument applies:
\[
\begin{aligned}
    \optrev &\leq \sigprobsingle(\signal_1) \varphi_2(\signal_1) + \sigprobsingle(\signal_2) \varphi_2(\signal_2) \\
    &\leq \sigprobsingle(\signal_1) \belief_2(\signal_1) + \sigprobsingle(\signal_2) \belief_2(\signal_2) \\
    &\leq 2y_1 + 2y_2 \leq 2.
\end{aligned}
\]
If \(\varphi_1(\signal_2) > \varphi_2(\signal_2)\), we have:
\[
\begin{aligned}
    \optrev &\leq \sigprobsingle(\signal_1) \varphi_2(\signal_1) + \sigprobsingle(\signal_2) \varphi_1(\signal_2) \\
    &\leq 2y_1 + (x_2 + y_2) - (x_1 + \varepsilon y_1)\left( \frac{x_1 + y_1}{x_1 + \varepsilon y_1} - 2 \right) \\
    &\leq (1 + 2\varepsilon)y_1 + y_2 + x_1 + x_2 \leq 2 + 2\varepsilon,
\end{aligned}
\]
where the second inequality uses the fact that 
\begin{align*}
\belief_1(\signal_2) = \frac{x_2+y_2}{x_2+\varepsilon y_2}\leq 2
\end{align*}
since $y_2<x_2$ and $\varepsilon\in(0,1)$.
This concludes the proof.
\end{proof}

\begin{proof}[Proof of \cref{prop:two_item_approx}]
To establish this result, we construct three distinct pricing mechanisms and show that the total revenue achieved by these mechanisms, in aggregate, is at least twice the optimal welfare.

First, consider the horizontal disclosure scheme. Let $\signal_1$ and $\signal_2$ denote the signals indicating that product 1 and product 2, respectively, have the highest value.
Recall that we denote $\belief(\signal)$ as the posterior mean value and $\sigprobsingle(\signal)$ as the ex ante probability of signal $\signal$.
Without loss of generality, we may assume that \(\belief_2(\signal_2) \geq \belief_1(\signal_1)\).
The optimal welfare can be expressed as:
\begin{align*}
    \optwel = \sigprobsingle(\signal_1) \belief_1(\signal_1) + \sigprobsingle(\signal_2) \belief_2(\signal_2).
\end{align*}
Now consider the following three different pricing mechanisms:
\begin{itemize}
    \item \(\mech_1\): Provide no information to the buyer and offer only the first product at a price equal to its expected value.
    \item \(\mech_2\): Provide no information to the buyer and offer only the second product at a price equal to its expected value.
    \item \(\mech_3\): Reveal to the buyer according to horizontal disclosure. Offer product prices $p_1 = \belief_1(\signal_1)$ and $p_2 = \belief_2(\signal_2) - \max\left\{0, \belief_1(\signal_2) - \belief_1(\signal_1)\right\}$.
\end{itemize}

We then analyze the revenue of each mechanism. For the first two mechanisms, the revenue corresponds to the expected value of the first or second product, respectively:
\[
\rev(\mech_1) = \sigprobsingle(\signal_1) \belief_1(\signal_1) + \sigprobsingle(\signal_2) \belief_1(\signal_2), \quad
\rev(\mech_2) \geq \sigprobsingle(\signal_2) \belief_2(\signal_2).
\]
Now consider the third mechanism. Note that given the construction of the prices, the buyer receiving signal $s_i$ will purchase product $i$ for $i\in\{1,2\}$, and the expected revenue is 
\begin{align*}
\rev(\mech_3) &= \sigprobsingle(\signal_1) \belief_1(\signal_1) + \sigprobsingle(\signal_2) \left(\belief_2(\signal_2) - \max\left\{0, \belief_1(\signal_2) - \belief_1(\signal_1)\right\} \right) \\
&\geq \sigprobsingle(\signal_1) \belief_1(\signal_1) + \sigprobsingle(\signal_2) \left(\belief_2(\signal_2) - \belief_1(\signal_2)\right).
\end{align*}

Combining the revenue bounds for all three mechanisms, we obtain:
\begin{align*}
    \rev(\mech_1) + \rev(\mech_2) + \rev(\mech_3) &\geq 2 \left( \sigprobsingle(\signal_1) \belief_1(\signal_1) + \sigprobsingle(\signal_2) \belief_2(\signal_2) \right) \\
    &= 2 \optwel.
\end{align*}
Therefore, at least one of the mechanisms among \(\mech_1\), \(\mech_2\), and \(\mech_3\) must achieve at least a~\(\frac{2}{3}\) fraction of the optimal welfare.
Finally, the tightness of the approximations is implied by \cref{lem:two_item_hardness}.
\end{proof}

\subsection{Full Surplus Extractions}
\label{apx:multi}

\label{subapx:full_surplus_multi}

We begin by providing a complete characterization in the case where the distribution is generic.
\begin{definition}[Generic Distribution]
A distribution $\dist$ is generic if for any $i\neq i'$, the total probability measure that $\val_i = \val_{i'}$ is zero. 
\end{definition}
\begin{proposition}
\label{prop:generic_full_surplus}
For any number of products and any \emph{generic} distribution \(\dist\), 
there exists a pricing mechanism with horizontal disclosure that extracts the full surplus if and only if for any \(i \neq i'\), it holds that
\begin{align*}
\expect[\dist]{ \val_i \given i^*(\val) = i} 
\geq \expect[\dist]{ \val_i \given i^*(\val) = i'}.
\end{align*}
\end{proposition}
\begin{proof}
We first prove the ``if" direction. 
Note that the if direction does not require the generic condition. 
Suppose the distribution \(\dist\) satisfies that 
\[
\expect[\dist]{ \val_i \given i^*(\val) = i} 
\geq \expect[\dist]{ \val_i \given i^*(\val) = i'}.
\]

In the following, we construct a horizontal disclosure together with a pricing mechanism that extracts the full surplus. Consider the horizontal disclosure \(\rbr{\signals, \experiment}\), where \(\signals = \{\signal_1,\signal_2,\cdots,\signal_m\}\) and \(\experiment\rbr{\val} = \signal_{i^{*}(\val)}\). Furthermore, consider the pricing mechanism in which the price of product \(i\), denoted by \(p_i\), is set as
\[
p_i = \expect[\dist]{\val_i \given i^*(\val) = i}.
\]

Notice that upon receiving signal \(\signal_i\), the buyer's belief about product \(i'\) is exactly
\[
\belief_{i'}\rbr{\signal_i} =  \expect[\dist]{ \val_{i'} \given i^*(\val) = i}.
\]
This implies that upon receiving signal \(\signal_i\), the buyer will always choose to purchase product~\(i\). Note that the price of product \(i\) is exactly the prior belief of product \(i\). This means that for any value profile \(\val\), the resulting allocation is efficient, and the buyer's utility under any signal \(\signal_i\) is always \(0\). Therefore, this information structure, together with the pricing mechanism, extracts the full surplus.

Next, we prove the ``only if'' direction. Suppose there exists an information structure together with a mechanism that extracts the full surplus. In order to extract the full surplus, the allocation must be efficient. Since the distribution is generic, this implies that for any value profile \(\val \in \supp\rbr{\dist}\), every signal \(\signal \in \supp\rbr{\experiment\rbr{\val}}\) that may be received given \(\val\) must induce the allocation of product \(i^{*}(\val)\). Due to the incentive compatibility constraint, the payments associated with these signals must be identical. Therefore, we can merge these signals without violating incentive compatibility, and the resulting revenue remains unchanged. This implies that after merging, there are at most \(m\) signals in total, where the \(i\)-th signal \(\signal_i\) corresponds to the allocation of product \(i\). For each value profile, the corresponding signal is simply the \(i^{*}(\val)\)-th signal. Hence, the information structure must be a horizontal disclosure. Moreover, observe that the mechanism must be a pricing mechanism. Since the mechanism extracts the full surplus, the buyer's utility under any signal \(\signal_i\) must be \(0\). This implies that the price $p_i$ for product $i$ must equal the prior belief of product $i$ upon receiving signal $i$:
\[
p_i = \belief_{i}\rbr{\signal_i}.
\]
Now consider the buyer's behavior. Since the buyer chooses to purchase product \(i\) upon receiving signal \(\signal_i\), it must be the case that purchasing any other product yields non-positive utility. Formally, for any \(i, i'\), it follows that
\[
\belief_{i'}\rbr{\signal_i} - p_{i'} \leq 0.
\]
Using the definition of \(p_i\) and the structure of horizontal disclosure, this implies
\[
\expect[\dist]{ \val_i \given i^*(\val) = i'} \leq \expect[\dist]{ \val_{i'} \given i^*(\val) = {i'}},
\]
which concludes the proof.
\end{proof}

\begin{proof}[Proof of \cref{lem:full_surplus}]
This is implied by the ``if'' direction in \cref{prop:generic_full_surplus} since in the proof of \cref{prop:generic_full_surplus}, the generic condition is not required for the ``if'' direction. 
\end{proof}

To prove \Cref{thm:full_surplus_multi}, we use the following lemma to verify that the conditions of \cref{lem:full_surplus} are met.

\begin{lemma}\label{lem:pairwise_submodularity}
For any number of products and for any distribution $\dist$ that is negatively affiliated, 
for any $i\neq i'$, we have 
\begin{align*}
\expect[\dist]{ \val_i \given i^*(\val) = i} 
\geq \expect[\dist]{ \val_i \given i^*(\val) = i'}.
\end{align*}
\end{lemma}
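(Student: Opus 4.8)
The plan is to condition on the values of the remaining items and then combine the resulting conditional statements. Fix $i\neq i'$; by a standard perturbation we may assume ties among the $\val_j$ have probability zero, so that $i^*(\val)=\argmax_j \val_j$ is almost surely well defined. For a realization $w$ of $\val_{-(i,i')}$ write $t(w):=\max_{k\notin\{i,i'\}}w_k$ (and $t(w):=0$ if $m=2$), let $\density_w$ denote the conditional density of $(\val_i,\val_{i'})$ given $\val_{-(i,i')}=w$, and let $q(w)$ be the marginal probability of $\val_{-(i,i')}=w$. On the event $E_w:=\{\max(\val_i,\val_{i'})\ge t(w)\}$ we have $\{i^*(\val)=i\}=\{\val_i\ge\val_{i'}\}$ and $\{i^*(\val)=i'\}=\{\val_{i'}\ge\val_i\}$, while off $E_w$ neither event occurs. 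Setting $\beta(w):=\prob[\dist]{i^*(\val)=i\given\val_{-(i,i')}=w}$, $\delta(w):=\prob[\dist]{i^*(\val)=i'\given\val_{-(i,i')}=w}$, $\mu_i(w):=\expect[\dist]{\val_i\given i^*(\val)=i,\ \val_{-(i,i')}=w}$ and $\mu_{i'}(w)$ analogously, the inequality to be proved is exactly
\[
\frac{\sum_w q(w)\,\beta(w)\,\mu_i(w)}{\sum_w q(w)\,\beta(w)}\ \ge\ \frac{\sum_w q(w)\,\delta(w)\,\mu_{i'}(w)}{\sum_w q(w)\,\delta(w)}.
\]

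The first step is to show $\mu_i(w)\ge\mu_{i'}(w)$ for every fixed $w$, which is where pairwise conditional log-submodularity of the pair $(i,i')$ (\cref{def:pairwise_logsubmodular}) enters. By assumption $\density_w$ is log-submodular in $(\val_i,\val_{i'})$; moreover $\mathbf{1}[\max(\val_i,\val_{i'})\ge t(w)]$ is log-submodular in $(\val_i,\val_{i'})$ as well (an elementary case check; equivalently, $E_w$ is a sublattice of the order obtained by flipping the sign of $\val_{i'}$), so the truncated density $\density_w\cdot\mathbf{1}_{E_w}$ is log-submodular. Flipping the sign of $\val_{i'}$ makes it an $\mathrm{MTP}_2$ density, so by the FKG (Harris) inequality the two increasing functions $\val_i$ and $\mathbf{1}[\val_i\ge\val_{i'}]$ are positively correlated under $\density_w\cdot\mathbf{1}_{E_w}$. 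Applying this with the indicators $\mathbf{1}[\val_i>\tau]$ in place of $\val_i$ shows in fact that the conditional law of $\val_i$ given $\{i^*(\val)=i,\ \val_{-(i,i')}=w\}$ first-order stochastically dominates the conditional law given $\{i^*(\val)=i',\ \val_{-(i,i')}=w\}$; in particular $\mu_i(w)\ge\mu_{i'}(w)$.

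The remaining — and main — difficulty is the aggregation over $w$: the pointwise inequality $\mu_i(w)\ge\mu_{i'}(w)$ alone does not imply the displayed comparison, because the weight profiles $w\mapsto q(w)\beta(w)$ and $w\mapsto q(w)\delta(w)$ differ. Clearing denominators and symmetrizing the cross term $\sum_{w,w'}q(w)q(w')\,\beta(w)\,\delta(w')\bigl(\mu_i(w)-\mu_{i'}(w')\bigr)$, and using $\mu_i\ge\mu_{i'}$ pointwise, reduces the claim to the statement that the conditional winning-probability ratio $\beta(w)/\delta(w)$ is co-monotone, as $w$ ranges over the support of $\val_{-(i,i')}$, with the conditional mean $\mu_{i'}(w)$ — equivalently, that the two weight profiles are ordered in a monotone-likelihood-ratio sense along this order. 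This co-monotonicity is exactly where pairwise conditional log-submodularity for the pairs $(i,k)$ and $(i',k)$ with $k\notin\{i,i'\}$ is used: raising any coordinate $w_k$ shifts the conditional laws of $\val_i$ and of $\val_{i'}$ downward in the negative-correlation sense, which moves both the conditional winning-probability ratio and the conditional mean of $\val_i$ in the same direction. Turning this into a rigorous monotone-likelihood-ratio statement on the multivariate parameter $w$, with the truncation to $E_w$ present, is the technical heart of the argument; the two earlier steps are routine applications of FKG-type inequalities.
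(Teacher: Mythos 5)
Your first step is the same as the paper's: condition on $\val_{-(i,i')}=w$ and on $E_w=\{\max(\val_i,\val_{i'})\geq t(w)\}$, check that the truncated conditional density is still log-submodular in $(\val_i,\val_{i'})$, and apply the FKG inequality (the paper invokes \cref{cor:conditional_expectation} for this) to get $\mu_i(w)\geq\mu_{i'}(w)$ pointwise. Where you diverge is in how seriously you take the aggregation over $w$. The paper's proof ends with the single sentence ``Finally, \cref{lem:pairwise_submodularity} holds by taking the summation over $\val_{-(i,i')}$,'' treating the aggregation as immediate, whereas you identify it as the main difficulty. Your concern is not misplaced on its face: $\expect{\val_i\given i^*(\val)=i}$ averages $\mu_i(w)$ with weights proportional to $q(w)\beta(w)$ while $\expect{\val_i\given i^*(\val)=i'}$ averages $\mu_{i'}(w)$ with weights proportional to $q(w)\delta(w)$, and a pointwise comparison between quantities averaged with two different weight profiles does not, in general, survive aggregation; so the summation step in the paper does need more than what is written.

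The genuine gap in your proposal is precisely the co-monotonicity claim you reduce to, which you leave unproved, and the heuristic you give for it does not establish it. Your symmetrization correctly reduces the lemma to showing that $(\mu_{i'}(w)-\mu_{i'}(w'))\bigl(\beta(w)\delta(w')-\beta(w')\delta(w)\bigr)\geq 0$ for all pairs $w,w'$ in the support of $\val_{-(i,i')}$. But $w$ is multidimensional, so this is a statement over all pairs, including incomparable ones; a coordinatewise heuristic that ``raising $w_k$ shifts both quantities in the same direction'' cannot by itself deliver it. Even the coordinatewise version is not clear: under conditional log-submodularity, raising $w_k$ pushes \emph{both} $\val_i$ and $\val_{i'}$ down (and also shrinks the event $E_w$), so the direction of change of the ratio $\beta(w)/\delta(w)$ is not determined, let alone that it tracks $\mu_{i'}(w)$. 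In sum, you reproduce the paper's pointwise FKG step faithfully and correctly observe that the paper's final summation sentence is doing more work than it acknowledges, but your proposal does not itself supply the missing aggregation argument, so it is incomplete.
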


For convenience, we first present the proof based on log-supermodularity instead of log-submodularity in this section. Let~$Z$ be a finite distributive lattice, and let $\mu$ be a non-negative function on it. 
Function $\mu$ satisfies log-supermodularity if 
for any $z,z'\in Z$, 
\begin{align*}
\mu(z\wedge z')\cdot\mu(z\vee z')\geq \mu(z)\cdot\mu(z).
\end{align*}
\begin{proposition}[FKG Inequality~\citep{fortuin1971correlation}]
\label{prop:fkg}
For any finite distributive lattice and any non-negative function $\mu$ that satisfies log-supermodularity,
for any non-decreasing functions $g, \hat{g}$ defined on $Z$, 
it holds that 
\begin{align*}
\rbr{\sum_{z\in Z} g(z)\hat{g}(z)\mu(z)}\rbr{\sum_{z\in Z} \mu(z)} \geq \rbr{\sum_{z\in Z} g(z)\mu(z)}
\rbr{\sum_{z\in Z} \hat{g}(z)\mu(z)}.
\end{align*}
\end{proposition}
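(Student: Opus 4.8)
The plan is to prove the FKG inequality by the classical induction on the dimension of the lattice, using a conditioning/averaging argument whose one-dimensional base case is Chebyshev's sum inequality. First I would reduce to the case $Z=C_1\times\cdots\times C_N$, a product of finite chains. By Birkhoff's representation theorem every finite distributive lattice embeds as a sublattice of a Boolean cube $\{0,1\}^N$; extending $\mu$ by $0$ off $Z$ keeps it log-supermodular (if an argument lies outside $Z$ the right-hand side of the defining inequality vanishes, and meets and joins of points of $Z$ stay in $Z$), and $g,\hat g$ extend to monotone functions on $\{0,1\}^N$ that still agree with the originals wherever $\mu>0$, e.g. $\tilde g(x)=\max\{g(z):z\in Z,\ z\le x\}$. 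Since adding a constant to $g$ (or to $\hat g$) leaves both sides of the claimed inequality unchanged, I may assume $g,\hat g\ge 0$; and since $\{z:\mu(z)>0\}$ is itself a sublattice and contributes the only nonzero terms, the bookkeeping at points where $\mu$ vanishes is routine. Now induct on $N$.

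The base case $N=1$ is a single chain, where log-supermodularity is vacuous and the claim is the weighted Chebyshev sum inequality. It follows from the identity
\[
\Big(\sum_z g\hat g\,\mu\Big)\Big(\sum_z \mu\Big)-\Big(\sum_z g\,\mu\Big)\Big(\sum_z \hat g\,\mu\Big)=\frac12\sum_{z,z'}\mu(z)\mu(z')\big(g(z)-g(z')\big)\big(\hat g(z)-\hat g(z')\big)\ \ge\ 0,
\]
where nonnegativity holds because on a chain any $z,z'$ are comparable and $g,\hat g$ are both non-decreasing, so each summand is a product of two quantities with the same sign.

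For the inductive step, write $Z=Z''\times C$ with $Z''$ a product of $N-1$ chains, and for $z''\in Z''$ define the marginal weight $\bar\mu(z'')=\sum_{c\in C}\mu(z'',c)$ together with the conditional averages $\bar g(z'')=\bar\mu(z'')^{-1}\sum_c g(z'',c)\mu(z'',c)$ and $\bar{\hat g}(z'')$ analogously. Applying the base case to each slice $\{z''\}\times C$ gives $\bar\mu(z'')^{-1}\sum_c g(z'',c)\hat g(z'',c)\mu(z'',c)\ge \bar g(z'')\bar{\hat g}(z'')$; summing against $\bar\mu$ then yields
\[
\Big(\sum_z g\hat g\,\mu\Big)\Big(\sum_z\mu\Big)\ \ge\ \Big(\sum_{z''}\bar g\,\bar{\hat g}\,\bar\mu\Big)\Big(\sum_{z''}\bar\mu\Big)\ \ge\ \Big(\sum_{z''}\bar g\,\bar\mu\Big)\Big(\sum_{z''}\bar{\hat g}\,\bar\mu\Big)=\Big(\sum_z g\,\mu\Big)\Big(\sum_z\hat g\,\mu\Big),
\]
where the middle inequality is the inductive hypothesis applied on $Z''$ to the triple $(\bar\mu,\bar g,\bar{\hat g})$, and the last equality is $\sum_{z''}\bar g\,\bar\mu=\sum_z g\,\mu$. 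Invoking the hypothesis requires two structural facts: (i) $\bar\mu$ is log-supermodular on $Z''$, and (ii) $\bar g,\bar{\hat g}$ are non-decreasing on $Z''$. Fact (ii) holds because log-supermodularity of $\mu$ gives $\mu(z'',c)\mu(w'',c')\ge \mu(z'',c')\mu(w'',c)$ whenever $z''\le w''$ and $c\le c'$, i.e. the weights $\mu(w'',\cdot)$ dominate $\mu(z'',\cdot)$ in the monotone-likelihood-ratio order on $C$, so the $\mu(w'',\cdot)$-average of the non-decreasing map $c\mapsto g(w'',c)$ is at least its $\mu(z'',\cdot)$-average, which in turn is at least $\bar g(z'')$ since $g(z'',\cdot)\le g(w'',\cdot)$ pointwise. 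Fact (i) is the statement that marginals of log-supermodular functions are log-supermodular.

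The main difficulty is precisely these two facts, and especially (i): naively summing the pointwise inequality $\mu(z''\wedge w'',c\wedge c')\,\mu(z''\vee w'',c\vee c')\ge \mu(z'',c)\,\mu(w'',c')$ over $(c,c')\in C\times C$ over-counts and does not telescope to $\bar\mu(z''\wedge w'')\,\bar\mu(z''\vee w'')\ge\bar\mu(z'')\,\bar\mu(w'')$, so one genuinely needs a one-dimensional rearrangement argument---equivalently, the one-variable case of the Ahlswede--Daykin four-functions theorem, which has an elementary case-analysis proof. In fact the cleanest packaging of the entire argument is to first prove the Ahlswede--Daykin four-functions theorem by induction on the number of coordinates (its base case is the two-element lattice and its inductive step is exactly the marginalization step above), and then deduce the proposition in one line by taking $\alpha=g\mu$, $\beta=\hat g\mu$, $\gamma=g\hat g\mu$, $\delta=\mu$ with both families equal to all of $Z$, using $g,\hat g\ge 0$ and log-supermodularity of $\mu$ to verify the hypothesis $\alpha(z)\beta(w)\le\gamma(z\vee w)\delta(z\wedge w)$.
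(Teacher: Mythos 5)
The paper states the FKG inequality as a classical result, citing \citet{fortuin1971correlation}, and gives no proof of its own, so there is no in-paper argument to compare against. Your proof is a correct and standard derivation. The reduction via Birkhoff's theorem, with $\mu$ extended by zero and $g,\hat g$ extended monotonically, is sound (the only sums that matter are over the sublattice $\{\mu>0\}$, and meets and joins of its elements stay inside it). The base case is indeed Chebyshev's sum inequality by the symmetrization identity you wrote. You are also right to flag the two structural facts needed in the inductive step, and in particular to note that the naive termwise summation does \emph{not} establish that $\bar\mu$ is log-supermodular: summing $\mu(z''\wedge w'',c\wedge c')\mu(z''\vee w'',c\vee c')\ge\mu(z'',c)\mu(w'',c')$ over $(c,c')$ double-counts the off-diagonal pairs on the left and produces an inequality in the wrong direction, so a genuine one-dimensional rearrangement argument is required. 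Your observation that MLR dominance of $\mu(w'',\cdot)$ over $\mu(z'',\cdot)$ (itself a consequence of log-supermodularity) yields monotonicity of the conditional averages is correct, modulo the usual bookkeeping when $\bar\mu(z'')=0$. Finally, routing the whole proof through the Ahlswede--Daykin four-functions theorem, and then deducing FKG by the substitution $\alpha=g\mu$, $\beta=\hat g\mu$, $\gamma=g\hat g\mu$, $\delta=\mu$ (after shifting so $g,\hat g\ge 0$), is indeed the cleanest packaging: it avoids the division by $\bar\mu$ entirely and isolates all the case analysis into the two-element base case. This is a complete and standard proof of the cited proposition.
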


\begin{corollary}\label{cor:conditional_expectation}
Given any \(\vals \subseteq \mathbb{R}^2\), 
let $\density$ be a probability mass function supported on $\vals$ and satisfying log-submodularity (\cref{def:negative_affiliation}), i.e., \begin{equation}\label{eq:supermodularity}
    f(\min\{\val_1, \val'_1\}, \max\{\val_2, \val'_2\})\cdot f(\max\{\val_1, \val'_1\}, \min\{\val_2, \val'_2\})\geq f(\val)\cdot f(\val').
\end{equation}
Then, the following inequality holds: 
\[(i)~\expect{\val_1 \mid \val_1\geq \val_2}\geq \expect{\val_1} \quad \text{and} \quad (ii)~\expect{\val_1 \mid \val_1\geq \val_2}\geq \expect{\val_1 \mid \val_1< \val_2}.\] 
\end{corollary}
\begin{proof}
For the distributive lattice \(\vals \subseteq \mathbb{R}^2\),
we define the meet and join operations as 
\[
\val \wedge \val' = (\min\{\val_1, \val'_1\}, \max\{\val_2, \val'_2\})
\quad \text{and} \quad
\val \vee \val' = (\max\{\val_1, \val'_1\}, \min\{\val_2, \val'_2\})
\]
for any \(\val, \val' \in \vals\).
According to this definition, distribution $f$ is, in fact, log-supermodular in lattice $V$. 

We choose $g(\val) = \val_1$ and $\hat{g}(\val) = \indicate{\val_1\geq \val_2}$. Using the partial order induced by $\vals$, if $\val\geq \val'$, i.e., $\val\wedge \val'=\val'$, then: (i) $\val_1\geq \val_1'$ and (ii) $\val_2\leq \val_2'$. Thus, both $g$ and $\hat{g}$ are non-decreasing on $\vals$.
Applying \cref{prop:fkg} to our chosen functions $g$ and $\hat{g}$, we have
\begin{align*}
\sum_{\val\in \vals} \val_1\cdot\indicate{\val_1\geq \val_2} \cdot\density(\val) \geq \rbr{\sum_{\val\in \vals} \val_1\cdot\density(\val)}
\rbr{\sum_{\val\in \vals} \indicate{\val_1\geq \val_2} \cdot\density(\val)},
\end{align*}
which implies that 
\[\expect{\val_1 \mid \val_1\geq \val_2}\geq \expect{\val_1} \quad \text{and} \quad \expect{\val_1 \mid \val_1\geq \val_2}\geq \expect{\val_1 \mid \val_1< \val_2}.\]
Therefore, \cref{cor:conditional_expectation} holds.
\end{proof}

\begin{proof}[Proof of \cref{lem:pairwise_submodularity}]
For any pair of products $i\neq i'$, let $F_{-(i,i')}^{\val_{-(i,i')}}$ be the joint distribution over values $(\val_i,\val_{i'})$ 
conditional on the value profile $\val_{-(i,i')}$ for other products and the event that $\max\{\val_i,\val_{i'}\} \geq \max_{j\neq i,i'}\val_j$. 
Since $\dist$ satisfies negative affiliation, by \cref{def:negative_affiliation}, it immediately implies that
distribution $\dist_{-(i,i')}^{\val_{-(i,i')}}$ satisfies negative affiliation as well.\footnote{When the original probability mass function satisfies log-supermodularity, it is not hard to verify that the conditional probability mass function also satisfies log-supermodularity.}
By \cref{cor:conditional_expectation}, an application of the FKG inequality to our setting, it follows easily that
\begin{align*}
\expect[\dist_{-(i,i')}^{\val_{-(i,i')}}]{ \val_i \given \indicate{\val_i\geq\val_{i'}}} 
\geq \expect[\dist_{-(i,i')}^{\val_{-(i,i')}}]{ \val_i} 
\geq \expect[\dist_{-(i,i')}^{\val_{-(i,i')}}]{ \val_i \given \indicate{\val_i<\val_{i'}}}.
\end{align*}
Finally, \cref{lem:pairwise_submodularity} holds by taking the summation over $\val_{-(i,i')}$.
\end{proof}

We are left to prove \cref{thm:full_surplus_exchange}. Similarly, we aim to verify the following lemma.

\begin{lemma}
If \(\dist\) is exchangeable, then for every \(i\neq i'\),
\[
\expect[\dist]{\,\val_i \mid i^*(\val)=i\,}\;\ge\;\expect[\dist]{\,\val_i \mid i^*(\val)=i'\,}.
\]
Hence the hypothesis of \cref{lem:full_surplus} holds for any exchangeable distribution.
\end{lemma}

\begin{proof}
Let \(M:=\max_{j\in[m]} \val_j\). On the event \(\{i^*(\val)=i\}\), the \(i\)-th coordinate attains the maximum, so \(\val_i=M\). Thus
\[
\expect[\dist]{\,\val_i \mid i^*(\val)=i\,}=\expect[\dist]{\,M \mid i^*(\val)=i}.
\]
On the event \(\{i^*(\val)=i'\}\) with \(i'\neq i\), we have \(\val_i\le M\), hence
\[
\expect[\dist]{\,\val_i \mid i^*(\val)=i'\,}\le \expect[\dist]{\,M \mid i^*(\val)=i'\,}.
\]

By exchangeability, the joint law is invariant under relabeling of coordinates. In particular, the distribution of the symmetric statistic \(M\) conditional on \(\{i^*(\val)=i\}\) is the same as that conditional on \(\{i^*(\val)=i'\}\) (a permutation swapping \(i\) and \(i'\) leaves \(M\) unchanged while swapping the two events). Therefore,
\[
\expect[\dist]{\,M \mid i^*(\val)=i\,}=\expect[\dist]{\,M \mid i^*(\val)=i'\,}.
\]
Combining the three displays yields
\[
\expect[\dist]{\,\val_i \mid i^*(\val)=i\,}\;=\;\expect[\dist]{\,M \mid i^*(\val)=i\,}
\;\ge\;\expect[\dist]{\,\val_i \mid i^*(\val)=i'\,},
\]
as claimed.
\end{proof}

\subsection{Suboptimality of Pricing Mechanisms}
\label{sub:Suboptimality_of_Item_Pricing}
In this section, we show that when the weakly negative correlation condition in \cref{def:negative_affiliation} is violated, pricing mechanisms are not revenue optimal even if the seller can adopt arbitrary information structures. 
Intuitively, without the negative correlation assumption, when the seller can use lotteries, the seller can sell the products more efficiently to the value profile with lower values without creating additional information rent for the value profile with higher values.
Our construction of a counterexample for illustrating this intuition relies on the fact that there are at least three products. 
In the special case with two products, we conjecture that pricing mechanisms are always revenue optimal without any distributional assumptions. 

Consider an instance with three products. The buyer has two different types of values. Specifically, with a probability of half each, the buyer has a value either $\val^{(1)}=(0, 20, 9)$ or $\val^{(2)}=(4, 0, 5)$. 
A feasible mechanism is to fully reveal the values to the buyer, and consider an allocation and payment rule such that 
$\alloc(\val^{(1)}) = (0,1,0), \alloc(\val^{(2)}) = (0.5,0,0.5)$
and $\pay(\val^{(1)}) = 20, \pay(\val^{(2)})=4.5$. 
The expected revenue from this mechanism is $12.25$.

Next, we restrict our attention to pricing mechanisms and show that the optimal revenue for these mechanisms is at most $12$. Note that our joint design problem can be viewed as first fixing a pricing mechanism, and then finding the information structure that maximizes the expected revenue within the given mechanism. 
The latter problem of finding the optimal information structure can be framed as a Bayesian persuasion problem, where the sender's utility function is determined by the pricing mechanism.
By \citet{kamenica2011bayesian}, since the state is binary, it is without loss of optimality to consider only information structures with binary signals. 
This further implies that, in the example of selling three products, it is without loss of optimality to restrict attention to pricing mechanisms that offer only two products for sale.
If product $1$ and $3$ are offered, the optimal welfare under such a mechanism is at most $7$, 
and if product $1$ and $2$ are offered, the optimal welfare under such a mechanism is at most $12$.
In both cases, due to the individual rationality constraint, the revenue of the seller is at most the welfare, which is at most $12$. 
Similarly, the revenue from selling only one product is at most $10$. 

Now we focus on mechanisms that offer only product $2$ and $3$. 
Let $\signals = \{\signal_2,\signal_3\}$ and 
let $p_2, p_3$ be the prices of products $2$ and $3$ respectively. 
Without loss of generality, we assume that the buyer will prefer purchasing product $i$ upon receiving signal $\signal_i$ for $i\in\{2,3\}$. 
Let $\alpha_1,\alpha_2$ be the probabilities that the buyer receives signal $\signal_2$ conditional on the value being $\val^{(1)}$ and $\val^{(2)}$, respectively. 
To ensure that the expected welfare is at least $12$, we must have $\alpha_1 \geq \frac{10}{11}$ and $\alpha_2 \leq \frac{1}{5}$. 
In this case, the posterior value for product $3$ conditional on signal $\signal_2$ is higher than that of signal $\signal_3$. 
Therefore, in the optimal mechanism, it must be the case that 
$$p_3 = \frac{\frac{9}{2}(1-\alpha_1) + \frac{5}{2}(1-\alpha_2)}{\frac{1}{2}(2-\alpha_1-\alpha_2)}$$
and 
$$p_2 = \frac{10\alpha_1}{\frac{1}{2}(\alpha_1+\alpha_2)}
- \frac{\frac{9}{2}\alpha_1+\frac{5}{2}\alpha_2}{\frac{1}{2}(\alpha_1+\alpha_2)} + p_3.$$
The expected revenue of the seller is 
\begin{align*}
R&=\frac{1}{2}(\alpha_1+\alpha_2)p_2 
+ \frac{1}{2}(2-\alpha_1-\alpha_2)p_3 = \frac{11}{2}\alpha_1 - \frac{5}{2}\alpha_2
+ \frac{9(1-\alpha_1) + 5(1-\alpha_2)}{(2-\alpha_1-\alpha_2)}. 
\end{align*}
First, note that 
\begin{align*}
\frac{\partial R}{\partial \alpha_1} 
= \frac{11}{2}-\frac{4(1-\alpha_2)}{(2-\alpha_1-\alpha_2)^2} > 0
\end{align*}
for $\alpha_1\in[\frac{10}{11},1],\;\alpha_2\in[0,\frac{1}{5}]$.
Therefore, the optimal revenue is maximized when $\alpha_1 = 1$, 
which simplifies to 
\begin{align*}
R&= \frac{21}{2}-\frac{5}{2}\alpha_2\leq \frac{21}{2} < 12. 
\end{align*}
Combining all cases, we show that the optimal revenue from pricing mechanisms is at most $12$, which is strictly less than the optimal revenue from lottery pricing.

\section{Additional Results}
\label{apx:additional}

\subsection{Suboptimality of Horizontal Disclosure}
\label{subapx:horizontal}
If the weakly negative correlation condition in \cref{def:negative_affiliation} is violated, mechanisms with horizontal disclosure can be far from revenue optimal, even if the seller is not restricted to pricing mechanisms. 
The main intuition is that when the values are positively correlated, simply revealing which product has the highest value for the buyer may disclose too much vertical information, allowing the buyer to perfectly infer the realized values of each product. This excessive information disclosure creates significant information rents for the buyer, making such a construction suboptimal.

Specifically, consider an instance with $n$ products and $n$ value profiles. 
Fixing a sufficiently small parameter $\epsilon > 0$. 
For each $i\in [n]$, the probability of the value profile $v^{(i)}$ is $\frac{1}{2^i(1-2^{-n})}$, $v^{(i)}_j = 2^i$ for any $j\neq i$, and $v^{(i)}_i = 2^i + \epsilon$. 

In this instance, if the mechanism discloses which product has the highest value, the buyer can infer the value profile from the highest value product. That is, horizontal disclosure is equivalent to fully revealing information structures in this example. 
Moreover, by fully revealing the values, it is easy to verify that the optimal revenue in this case is at most $\frac{1}{1-2^{-n}}+\epsilon$. 
However, if the seller reveals no information to the buyer, the seller can extract revenue equal to the highest marginal mean value, which is at least $\frac{n}{1-2^{-n}}$.
The multiplicative gap in expected revenue is $n$ when $\epsilon\to 0$. 

\subsection{No Full Surplus Extraction}
\label{subapx:no_full_surplus}
In this section, we provide a simple example to show that full surplus extraction may not be possible in general without the assumption of negative affiliation.
Specifically, consider the example with two products for sale, where the buyer has two possible valuation profiles:

\begin{equation}
    (v_1, v_2) =
    \begin{cases}
      (5,4) & \text{with probability } \frac{1}{2} \\
      (9,10) & \text{with probability } \frac{1}{2}
    \end{cases}
  \end{equation}
If the seller uses horizontal disclosure, the buyer learns exactly which valuation profile he has. Under this information structure, the optimal mechanism is to post a price of $5$ for product 
$1$ and $6$ for product $2$, yielding an expected revenue of $5.5$.

In contrast, by withholding all information and posting a price of $7$ for product $2$ and an infinite price for product $1$, product $2$ is sold with a probability of $1$, generating an expected revenue of $7$, which exceeds $5.5$. Moreover, while this mechanism is optimal for the seller, the expected revenue remains strictly less than the optimal welfare of $7.5$.

\subsection{Alternative Unit-demand Utility Model}
\label{subapx:unit_demand}
In this paper, we assume that for any realized allocation $\alloc\in\{0,1\}^m$, 
the utility of the buyer with a posterior mean value $\belief$ for receiving allocation $\alloc$ while paying price $\pay$ is 
\begin{align*}
\util(\alloc,\pay;\belief) = \max_{i\in[m]} \belief_i\alloc_i - \pay.
\end{align*}
This aligns with applications where the buyer must make a consumption choice before the realization of the values to enjoy the utility.
An alternative model for unit-demand utilities could encompass settings where the buyer always enjoys the maximum realized value after receiving a set of products. 
That is, given the posterior distribution $\mu$ over product values, the utility of the buyer for receiving allocation $\alloc$ while paying price $\pay$ is 
\begin{align*}
\util(\alloc,\pay;\mu) = \expect[\val\sim\mu]{\max_{i\in[m]} \val_i\alloc_i} - \pay.
\end{align*}
In the alternative model, it is straightforward to show that the optimal mechanism reveals no information to the buyer and sells all products at a price equal to the buyer's expected maximum value. This mechanism extracts the full surplus from the buyer. 
However, this approach, which effectively bundles all products in a unit-demand setting, is impractical in most real-world applications. In contrast, when the seller is restricted to pricing mechanisms or uniform pricing mechanisms, all of our approximation results (\cref{prop:uniform_pricing,thm:approx_opt,prop:two_item_approx,thm:full_surplus_multi}) directly generalize in this alternative model. This is because our analysis compares the revenue of pricing mechanisms to the optimal welfare, which remains unchanged across both utility models. Therefore, the conclusion that pricing mechanisms are competitive when the seller can design the information structure continues to hold under the alternative model.

\subsection{Additive Valuations}
\label{subsec:additive}
In our paper, we focus on unit-demand valuations. If the buyer instead has additive valuations, 
that is, for any realized allocation $\alloc\in\{0,1\}^m$, 
the utility of the buyer with posterior value $\belief$ for receiving allocation $\alloc$ while paying price $\pay$ is 
\begin{align*}
\util(\alloc,\pay;\belief) = \sum_{i\in[m]} \belief_i\alloc_i - \pay.
\end{align*}
In this case, pricing mechanisms are always revenue optimal in the joint design problem. 
Specifically, by revealing no information to the buyer and charging the prior mean value for all products, 
the seller can always extract the full surplus using pricing mechanisms, regardless of the value distributions.

\subsection{Additional Design Power of the Seller}
\label{subapx:additional_design}
In this paper, we limit the seller's design power by restricting the seller from charging prices for information disclosure \citep[c.f.,][]{esHo2007optimal,li2017discriminatory} or setting prices directly contingent on the realized signals. Nonetheless, we show that pricing mechanisms, when paired with a suitable information structure, are already competitive even relative to the benchmark of optimal welfare under these restrictions. In applications where such additional design powers are available to the seller, a suitably designed pricing mechanism---augmented with prices for information disclosure or prices contingent on the true signals---can only generate a weakly higher expected revenue. Moreover, the benchmark of optimal welfare remains unchanged. Therefore, pricing mechanisms are even more competitive in those settings.

\end{document}